\mathchardef\mhyphen="2D 
\newcommand{\dclass}{\mathrm{D}} 
\newcommand{\rclass}{\mathrm{R}} 
\newcommand{\qclass}{\mathrm{Q}} 
\newcommand{\bs}{\mathrm{bs}} 
\newcommand{\fbs}{\mathrm{fbs}} 
\newcommand{\s}{\mathrm{s}} 
\newcommand{\degree}{\mathrm{deg}} 
\newcommand{\cer}{\mathrm{C}} 
\newcommand{\cmin}{\mathrm{C_{min}}} 
\newcommand{\cminstar}{\mathrm{C_{min}^*}} 
\newcommand{\rank}{\mathrm{rank}} 
\newcommand{\alt}{\mathrm{alt}} 
\newcommand{\stripe}{\mathrm{STRIPE}}
\newcommand{\cP}{\mathcal{P}}
\newcommand{\G}{\mathcal{G}}
\newcommand{\codim}{\mathsf{codim}}
\newcommand{\CC}{\mathsf{CC}}
\newcommand{\subcubedt}{\mathrm{D^{sc}}} 
\newcommand{\ind}{\mathsf{IND}}
\newcommand{\pdt}{\mathsf{PDT}}
\newtheorem{question}{Question}
\newtheorem{theorem}{Theorem}[section]
\newtheorem{definition}[theorem]{Definition}
\newtheorem{lemma}[theorem]{Lemma}
\newtheorem{observation}[theorem]{Observation}
\newtheorem{proposition}[theorem]{Proposition}
\newtheorem*{proof*}{Proof of \hspace{-10pt}}
\newtheorem{open}[theorem]{{\color{red}Question}}
\date{}
\title{Decision Tree Complexity versus Block Sensitivity and Degree}
\author{
    Rahul Chugh\\[1mm]
    \small IIT Kharagpur,\\ 
    \small \texttt{rahulchugh@iitkgp.ac.in}
    \and
    Supartha Podder\footnote{S.P. is supported by US Department of Energy (grant no DE-SC0023179) and partially supported by US National Science Foundation (award no 1954311).}\\[1mm]
     \small Stony Brook University.\\
    \small \texttt{supartha@cs.stonybrook.edu}
    \and
    Swagato Sanyal\footnote{S.S. is supported by an ISIRD grant by Sponsored Research and Industrial Consultancy, IIT Kharagpur.}\\[1mm]
    \small IIT Kharagpur,\\ 
    \small \texttt{swagato@cse.iitkgp.ac.in}
    }
\begin{document}
\maketitle

\begin{abstract}
Relations between the decision tree complexity and various other complexity measures of Boolean functions is a thriving topic of research in computational complexity. While decision tree complexity is long known to be polynomially related with many other measures, the optimal exponents of many of these relations are not known. It is known that decision tree complexity is bounded above by the cube of block sensitivity, and the cube of polynomial degree. However, the widest separation between decision tree complexity and each of block sensitivity and degree that is witnessed by known Boolean functions is quadratic.

Proving quadratic relations between these measures would resolve several open questions in decision tree complexity. For example, we get a tight relation between decision tree complexity and square of randomized decision tree complexity and a tight relation between zero-error randomized decision tree complexity and square of fractional block sensitivity, resolving an open question raised by Aaronson \cite{aaronson2008quantum}. In this work, we investigate the tightness of the existing cubic upper bounds. 

We improve the cubic upper bounds for many interesting classes of Boolean functions. We show that for graph properties and for functions with a constant number of alternations, both of the cubic upper bounds can be improved to quadratic. We define a class of Boolean functions, which we call the zebra functions, that comprises Boolean functions where each monotone path from $0^n$ to $1^n$ has an equal number of alternations. This class contains the symmetric and monotone functions as its subclasses. We show that for any zebra function, decision tree complexity is at most the square of block sensitivity, and certificate complexity is at most the square of degree. 

Finally, we show using a lifting theorem of communication complexity by G{\"{o}}{\"{o}}s, Pitassi and Watson that the task of proving an improved upper bound on the decision tree complexity for all functions is in a sense equivalent to the potentially easier task of proving a similar upper bound on communication complexity for each bi-partition of the input variables, for all functions. In particular, this implies that to bound the decision tree complexity it suffices to bound smaller measures like parity decision tree complexity, subcube decision tree complexity and decision tree rank, that are defined in terms of models that can be efficiently simulated by communication protocols.
\end{abstract}
\section{Introduction}
The objective of computational complexity theory is to determine the amount of resources needed to solve various computational problems. A central goal of computational complexity theory is to prove limitations of general computational models meant to capture arbitrary computations that can be realized physically. Examples of such models are Turing machines and Boolean circuits. However, proving unconditional negative results for such general models seems far beyond the current reach of researchers. Hence, a popular theme of research has been to analyze more restricted models of computation, whose examples include Boolean circuits of bounded depth, communication complexity and decision trees. Such pursuits have borne fruits, and may serve as a stepping stone to build up towards a more complete understanding of general models.

This paper deals with decision tree. Decision tree is amongst the simplest models of computation. Unlike the more general models, it is often within the reach of researchers to determine the complexity of various interesting tasks in this model. A rich body of work has emerged centering the study of decision tree complexity, its variants and their connections with many other measures of complexity. This pursuit has resulted in a large number of measures having been defined and related to decision tree complexity. Interestingly, decision tree complexity and all of these measures have been shown to be polynomially related to one another. We refer the readers to the survey of  Buhrman and de Wolf \cite{BdeW02Survey} in this regard. Attention has focussed on determining the exact polynomial dependence between various measures. Table $1$ in \cite{ABKRT21} records the current status of our knowledge of these relations.

We now informally define the measures relevant to this work. The formal definitions may be found in \Cref{sec:exprelims}. Let $f$ be a Boolean function and $x$ be a generic input string. A decision tree that computes $f$ is an algorithm that queries various input bits of an input string $x$, possibly adaptively, to compute $f(x)$. The complexity of a decision tree $\mathcal{A}$ that computes $f$, is the maximum number of input bits that $\mathcal A$ queries on any $x$ (possibly adaptively) to compute $f(x)$. The \emph{deterministic decision tree complexity} of $f$, denoted by $\dclass(f)$, is the minimum complexity of any decision tree that computes $f$. A subset $S \subseteq [n]$ of indices is called a \emph{sensitive block} for $x$ if flipping all bits in $x$ with indices in $S$ flips the value of $f$. The block sensitivity of $f$ on $x$, denoted by $\bs(f, x)$, is the maximum number of disjoint sensitive blocks in $x$. The block sensitivity $\bs(f)$ of $f$ is $\max_{x} \bs(f, x)$. $\bs(f, x)$ can be formulated as the value of an integer linear program. The value of its fractional relaxation is the fractional block sensitivity $\fbs(f, x)$ on $x$. The fractional block sensitivity $\fbs(x)$ is defined to be $\max_{x} \fbs(f, x)$. The exact degree $\degree(f)$ of $f$ is the degree of the unique multi-linear polynomial $P_f$ that computes $f$. Let $f(x)=b$. Then the $b$-certificate complexity of $f$ on $x$, denoted by $\cer_b(f, x)$, is the minimum co-dimension of a subcube $\mathcal{C}$ that contains $x$, such that $f(y)=b$ for every $y \in \mathcal{C}$. The \emph{certificate complexity} (resp. \emph{minimum certificate complexity}) of $f$, denoted by $\cer(f)$ (resp. $\cmin(f)$), is $\max_x\{C_{f(x)}(f, x)\}$ (resp. $\min_x\{C_{f(x)}(f, x)\}$). We define $\cminstar(f)$ to be $\max_\mathcal{C} \cmin(f\mid_\mathcal{C})$, where $f\mid_\mathcal{C}$ denotes the restriction of $f$ to $\mathcal{C}$, and the maximum is over all subcubes of the domain $\{0,1\}^n$.

It is known (see Table $1$ in \cite{ABKRT21}; also follows from \Cref{prop:basic_facts} parts (2), (3), (4) and (5)) that for all Boolean functions $f$, $\dclass(f)=O(\bs(f)^3)$ and $\dclass(f)=O(\degree(f)^3)$. It is also known that there exist Boolean functions $g, h$ such that $\dclass(g)=\Omega(\bs(g)^2)$ and $\dclass(h)=\Omega(\degree(h)^2)$. In light of the above, the following is a natural question.
\begin{question}
\label[question]{qn:q1}
Are the upper bounds (1) $\dclass(f)=O(\bs(f)^3)$ and (2) $\dclass(f)=O(\degree(f)^3)$ asymptotically tight?
\end{question}
Question~\ref{qn:q1} forms the center of our study, with a special focus on the possibility of a negative answer. If the first bound can be improved to $\dclass(f) = O(\bs(f)^2)$ to match the widest separation known, then it would resolve several open questions in decision tree complexity. Most notably, it would imply the following tight\footnote{Some of the results will be tight up to poly-logarithmic factors} relations: $\dclass(f) = O(\rclass(f)^2)$ and quantum decision tree complexity is upper bounded by the square of fractional block sensitivity i.e., $\qclass(f) = \tilde{O}(fbs(f)^2)$.
It will also imply the tight relation: $\rclass_0(f) = O(\fbs(f)^2)$, resolving an open question by Aaronson 
\cite{aaronson2008quantum}.
Similarly, if $\dclass(f) = O(\degree(f)^2)$ is true, it will also improve several relations among  measures in decision tree complexity e.g., the current best upper bound of $\dclass(f) =O(\s(f)^6)$ would improve to $\dclass(f) =O(\s(f)^4)$, and we will achieve the tight relation $\dclass(f) =O(\qclass_E(f)^2)$, where $\qclass_E$ is the exact quantum decision tree complexity.
\subsection{Our results}
We now give an overview of our results. We divide our results into two parts.
\subsubsection{Improved upper bounds for classes of functions}
Our first category of results studies Question~\ref{qn:q1} for various classes of Boolean functions.
\paragraph{Graph properties}
A graph property is a Boolean function where the input bit string is the adjacency matrix of an undirected graph, which is invariant under permutations of the vertices. The input to a graph property of graphs with $n$ vertices is a string in $\{0,1\}^{n \choose 2}$. Each input bit is indexed by a (unordered) pair of vertices. The value of the bit indicates whether or not there is an edge between those vertices. Decision tree complexity of graph properties has seen a long and rich line of research \cite{R73, K74, RV76, BBL74, SW86, Y87, K88, H91, CK07}.

We provide strong negative answers to Questions~\ref{qn:q1} (1) and~\ref{qn:q1} (2) for graph properties.
\begin{theorem}
\label[theorem]{thm:graph_P}
For any graph property $\cP$ of graphs with $n$ vertices, (1) $\dclass(\cP) = O(\bs(\cP))^2$ and (2) $\dclass(\cP)=O(\degree(\cP))^2$.
\end{theorem}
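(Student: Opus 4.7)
The plan is to leverage the vertex-transitivity of $S_n$ acting on the $N=\binom{n}{2}$ edge variables of a graph property. Our entire argument reduces both quadratic relations to the following two lower bounds, set against the trivial upper bound $\dclass(\cP)\le \binom{n}{2}$: namely, for any non-trivial graph property $\cP$, (i) $\bs(\cP)=\Omega(n)$ and (ii) $\degree(\cP)=\Omega(n)$. Given (i) and (ii) the conclusion is immediate: $\dclass(\cP)\le \binom{n}{2}=O(n^{2})=O(\bs(\cP)^{2})=O(\degree(\cP)^{2})$. The constant-$\cP$ case is trivial since then $\dclass(\cP)=0$, so assume $\cP$ is non-constant throughout.

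For (i), I would walk along a monotone path from the empty graph to $K_n$, adding edges one at a time in an arbitrary order. Non-triviality forces the value of $\cP$ to flip at some step, witnessing a graph $G^\star$ and a sensitive edge $e^\star$ at $G^\star$. Since $S_n$ acts transitively on edges, any other edge can be realized as $\sigma(e^\star)$ for some $\sigma\in S_n$, with $\sigma(G^\star)$ inheriting sensitivity at that position. An averaging/pigeonhole argument over many permuted copies then produces a single graph $H$ supporting $\Omega(n)$ pairwise edge-disjoint sensitive blocks, proving $\bs(\cP)\ge \bs(\cP,H)=\Omega(n)$. Alternatively, one may directly invoke folklore results in the spirit of Wegener that $\mathrm{s}(\cP)=\Omega(n)\le \bs(\cP)$ for every non-trivial graph property.

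For (ii), I would restrict $\cP$ to an $S_{n-1}$-symmetric slice. Fix a vertex $v$ and set all edges not incident to $v$ to $0$; the resulting restriction depends only on the $n-1$ star-edges $\{v,u\}$, and the stabilizer of $v$ in $S_n$ permutes them transitively. Hence $\cP|_{\mathrm{star}(v)}$ is a symmetric Boolean function on $n-1$ variables. If this restriction is non-constant, the degree lower bound for non-constant symmetric Boolean functions (von zur Gathen--Roche) gives $\degree(\cP|_{\mathrm{star}(v)})=\Omega(n)$, and monotonicity of degree under restriction yields $\degree(\cP)=\Omega(n)$. If $\cP$ is constant on stars, I would iterate with other $S_n$-invariant subgraph families---perfect matchings, complete bipartite subgraphs, disjoint cliques, and so on---each of which yields a symmetric restriction on linearly many variables. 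Non-triviality of $\cP$ forces at least one such slice to be non-constant.

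The main obstacle is (ii). The Nisan--Szegedy bound $\bs(\cP)\le 2\,\degree(\cP)^{2}$ converts (i) into only $\degree(\cP)=\Omega(\sqrt{n})$, which falls short, so one genuinely needs a symmetry-exploiting argument: either the case-analytic restriction strategy sketched above, or a direct representation-theoretic proof that any $S_n$-invariant non-constant Boolean function on $\binom{n}{2}$ variables has degree $\Omega(n)$. By contrast (i) is comparatively routine, flowing from the standard machinery for sensitivity and block sensitivity of transitive Boolean functions.
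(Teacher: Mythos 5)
Your skeleton is exactly the paper's: prove $\bs(\cP)=\Omega(n)$ and $\degree(\cP)=\Omega(n)$ for non-trivial $\cP$ and combine with $\dclass(\cP)\le\binom{n}{2}$. The problems are in the two lower bounds. For (i), your averaging/pigeonhole sketch does not work as described: edge-transitivity gives you a sensitive edge $\sigma(e^\star)$ at the \emph{different} input $\sigma(G^\star)$ for each $\sigma$, and permuting copies does not by itself produce $\Omega(n)$ pairwise disjoint sensitive blocks at a single graph $H$. Indeed, transitivity alone provably cannot give this: Amano \cite{A11} constructs transitive functions on $N$ variables with $\bs=O(N^{3/7})$, which for $N=\binom{n}{2}$ is $o(n)$, so "standard machinery for transitive functions" is insufficient and the graph structure must be used. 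Your fallback of citing the known sensitivity lower bound for non-trivial graph properties (Tur\'an \cite{T84}, giving $\s(\cP)\ge n/4$ and hence $\bs(\cP)=\Omega(n)$) is legitimate, so part (1) can be rescued by citation, but the argument you actually sketch is not a proof.

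The genuine gap is in (ii). Your plan hinges on the claim that non-triviality forces some symmetric slice (star, perfect matching, disjoint cliques, \ldots, with all other edges set to $0$) to be non-constant, and this is both unproven and false as stated: take $\cP$ to be "$G$ has at least $\binom{n}{2}-1$ edges" (or any property whose sparsest member has $\omega(n)$ edges); then every restriction that zeroes the edges outside such a sparse family is identically $0$, so none of your slices is non-constant, and ad hoc denser families (e.g.\ complete bipartite) do not yield \emph{symmetric} functions on $\Theta(n)$ variables. The paper closes exactly this hole by anchoring the argument at a graph $G\in\cP$ with the minimum number $m$ of edges and splitting into cases: if $m>n/4$, setting the non-edges of $G$ to $0$ leaves the AND of $m\ge n/4$ variables, which already has degree and block sensitivity $\Omega(n)$; if $m\le n/4$, then $G$ has at least $n/2$ isolated vertices, and after deleting the star of one non-isolated vertex (leaving $G'\notin\cP$ by minimality) one obtains a non-constant symmetric function on $\Omega(n)$ variables not by a plain restriction but by \emph{identifying}, for each candidate vertex, the $d$ edges to the fixed neighborhood into a single variable, using that degree and block sensitivity do not increase under restriction \emph{and identification} of variables; this single construction handles both measures at once. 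Your proposal is missing the minimum-edge-graph case analysis and the identification-of-variables step, and without them the degree bound --- the part you correctly identify as the crux --- is not established.
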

\Cref{thm:graph_P} follows immediately from the following lemma and the observation that $\dclass(\cP)\leq\binom{n}{2}$.
\begin{lemma}
\label[lemma]{lemma:graph_key}
For any non-trivial graph property $\cP$ on any graph $G=(V,E)$ with $|V|=n$, (1) $\bs(\cP) = \Omega(n)$ and (2) $\degree(\cP) = \Omega(n)$.
\end{lemma}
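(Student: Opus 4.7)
The plan is to exhibit, for every non-trivial graph property $\cP$, a subcube restriction of $\cP$ whose restricted function is a non-constant symmetric Boolean function on $m = \Omega(n)$ variables; the lemma will then follow from standard lower bounds on the block sensitivity and exact degree of such symmetric functions.

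The natural symmetric restrictions arise as follows. Fix any vertex $v \in V$; the only graphs on $V \setminus \{v\}$ left fixed by every permutation in $S_{n-1}$ are the empty graph and the complete graph $K_{n-1}$. I would therefore consider the two induced restrictions $g^{0}_v(x) = \cP(\text{star at } v \text{ with edge pattern } x)$ and $g^{1}_v(x) = \cP(K_{n-1} \cup \text{star at } v \text{ with edge pattern } x)$, where $x \in \{0,1\}^{n-1}$ encodes the edges from $v$ to $V \setminus \{v\}$. By $S_n$-invariance of $\cP$, both $g^0_v$ and $g^1_v$ are genuinely symmetric Boolean functions of $n-1$ bits. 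If one of them is non-constant for some $v$, then the lemma follows at once: a non-constant symmetric Boolean function on $m$ bits has block sensitivity at least $\lceil m/2 \rceil$ (pick a Hamming layer $k$ where the function changes value and use the $\max(m-k,\,k+1)$ individually sensitive bits on one side), and exact degree $m - o(m)$ by the classical von zur Gathen--Roche theorem on symmetric Boolean functions. Since $\bs$ and $\degree$ are non-increasing under restriction to a subcube, we obtain $\bs(\cP), \degree(\cP) = \Omega(n)$.

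The remaining case, in which $g^{0}_v$ and $g^{1}_v$ are constant for every vertex $v$, I would handle by a direct combinatorial argument on a minimum-edge $1$-certificate $G^*$ of $\cP$, assuming WLOG that $\cP(\emptyset) = 0$. Writing $c = |E(G^*)|$ and $t \leq 2c$ for the number of non-isolated vertices of $G^*$: if $c = \Omega(n)$, then by the minimality of $G^*$ every edge of $G^*$ is individually sensitive at $G^*$, so $\bs(\cP) \geq c = \Omega(n)$, and restricting $\cP$ to the subcube that fixes all non-$G^*$ edges to $0$ turns the $c$ free variables into $\mathrm{AND}_c$, giving $\degree(\cP) \geq c = \Omega(n)$. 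If $c = O(1)$, vertex-transitivity lets us pack $k = \lfloor n/t \rfloor = \Omega(n)$ vertex-disjoint isomorphic copies $G_1, \ldots, G_k$ of $G^*$ into $[n]$; each contributes an edge-disjoint sensitive block of size $c$ at the empty input, so $\bs(\cP, \emptyset) \geq k = \Omega(n)$, and restricting to the subcube that frees the edges of $G_1, \ldots, G_k$ and fixes the rest to $0$ reveals (when $\cP$ is monotone) an $\mathrm{OR}$ of $k$ disjoint $\mathrm{AND}_c$'s, whose multilinear polynomial has degree $kc = \Omega(n)$.

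The main obstacle I anticipate is the intermediate regime $c = n^{\alpha}$ with $\alpha \in (0,1)$, where neither ``singletons at $G^*$'' nor ``disjoint copies at $\emptyset$'' individually reaches $\Omega(n)$. Here I would look for a structured input that is poised just below a $\cP$-trigger on many vertex-disjoint patterns simultaneously --- for instance, a union of $\lfloor n/t \rfloor$ vertex-disjoint copies of $G^*$ with one carefully chosen edge removed from each, so that restoring any single missing edge flips $\cP$ --- or, alternatively, I would replace the $S_{n-1}$-symmetric base graph on $V \setminus \{v\}$ by a vertex-transitive graph $H_0$ (such as a cycle or a Cayley graph), producing a restriction invariant under a transitive but smaller group for which separate $\bs$ and $\degree$ lower bounds for transitive Boolean functions can be invoked. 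Establishing such a choice uniformly in $\cP$, while simultaneously handling the non-monotone case for the degree bound, is the most delicate step.
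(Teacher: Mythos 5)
There is a genuine gap, and you have named it yourself: the intermediate regime where the minimum number of edges $c$ of a graph in $\cP$ satisfies $\omega(1) \le c \le o(n)$ is not handled, and even in your $c=O(1)$ case the degree bound is only argued under a monotonicity assumption that the lemma does not grant (without monotonicity, the restriction to $k$ vertex-disjoint potential copies of $G^*$ need not be an OR of ANDs, since $\cP$ may reject unions of copies). Your opening move with the star restrictions $g^0_v, g^1_v$ also does not mesh with the fallback: the fallback cases are organized by $c$, so even when both star restrictions are constant you still face all three regimes, and two of them are incomplete. The sensitivity/block-sensitivity claims you do prove (singleton edges at a minimum graph, disjoint copies at the empty graph) are fine, but they never reach $\Omega(n)$ simultaneously with a degree bound in the middle regime.

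The missing idea, which is how the paper closes exactly this gap, is \emph{identification of variables} rather than restriction alone. After complementing so that the empty graph is not in $\cP$, let $G$ be a graph in $\cP$ with the minimum number $m$ of edges. If $m>n/4$ you argue as you did (AND on $m$ variables). If $m\le n/4$, then $G$ has at least $n/2$ isolated vertices; pick a non-isolated vertex $v_k$ with neighbourhood $\{v_1,\dots,v_d\}$, and let $G'$ be $G$ with all edges at $v_k$ deleted, so $G'\notin\cP$ by minimality. Now for each of the $\ge n/2+1$ vertices $v_i$ that are isolated in $G'$ (including $v_k$), \emph{identify} the $d$ edge-variables $\{v_i,v_1\},\dots,\{v_i,v_d\}$ into a single Boolean variable $x_i$, and set all other undetermined edge-variables to $0$. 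Inputs of equal Hamming weight yield isomorphic graphs, so the resulting function of $(x_k,\dots,x_n)$ is symmetric; it is non-constant because weight $0$ gives $G'\notin\cP$ and weight $1$ gives a graph isomorphic to $G\in\cP$. A non-constant symmetric function on $\Omega(n)$ variables has block sensitivity and degree $\Omega(n)$ (\Cref{prop:basic_facts} (9)), and both measures are non-increasing under restriction \emph{and} identification of variables (\Cref{prop:basic_facts} (7), (8)), which yields the lemma uniformly in $m$ and without any monotonicity assumption. This bundling of whole neighbourhood-copies into single variables is precisely what your star-over-empty/complete-base construction cannot supply, and it is what makes the intermediate regime disappear.
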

\Cref{lemma:graph_key} (1) and (2) are both asymptotically tight. Their tightness is witnessed by graph properties \emph{SINK} \cite{R73} and \emph{SCORPION GRAPH} \cite{BBL74}. For both of these graph properties, $\bs(\cdot), \degree(\cdot) \leq\dclass(\cdot)=O(n)$.

Graph property is a subclass of the rich and important class of \emph{transitive Boolean functions}, which is the class of functions that are invariant under the action of some transitive group acting on the variables. Various complexity measures of transitive functions have been analyzed in the literature. Sun \cite{S07} showed that for any non-trivial transitive Boolean function $f$ on $N$ variables, $\bs(f)=\Omega(N^{1/3})$. Amano \cite{A11} constructed a transitive Boolean function $f$ on $N$ bits for which $\bs(f)=O(N^{3/7})$ holds. Thus, the statement of \Cref{lemma:graph_key} (1) does not hold for arbitrary transitive functions, and the underlying graph structure and the invariance under permutations of vertices are crucial for the lemma to hold. Kulkarni and Tal \cite{KT16} showed that fractional block sensitivity of any non-trivial transitive function on $N$ variables is $\Omega(\sqrt{N})$. \Cref{lemma:graph_key} (1) proves the same lower bound on the smaller measure of block sensitivity (\Cref{prop:basic_facts} (1)) for the special case of graph properties. We remark that for the special case of non-trivial monotone graph properties, the bound in \Cref{lemma:graph_key} (2) can be improved to $\Omega(n^2)$ \cite{RV76, DK99}.

\paragraph{Zebra functions} We say that a function is a \emph{zebra function} if in each monotone path in the Boolean hypercube from $0^n$ to $1^n$, the function value changes equal number of times. The class of zebra functions contains symmetric and monotone Boolean functions as two important sub-classes. However, there are interesting zebra functions that are neither monotone nor symmetric. One example is the \emph{Kushilevitz's function} (Footnote $1$ on page $560$ of \cite{NW95}), for which the value of the function is determined by the Hamming weight of the input, unless the Hamming weight is $3$ \cite{W22}. Another set of examples is functions expressible as $g(\lceil \ell(x_1,\ldots,x_n)\rceil)$, where $g:\mathbf{Z}\rightarrow\{0,1\}$ is arbitrary, and $\ell(x_1,\ldots,x_n):=\ell_0+\sum_{i=1}^n \ell_ix_i$ is an affine function where each coefficient $\ell_i$ lies in the interval $(0, 1)$. Visually, a zebra function induces a partition of the Boolean hypercube into monochromatic``stripes" (see \Cref{fig:zebra}). The value of the function is constant in each stripe, and different in adjacent stripes. We are optimistic that zebra functions will turn out to be useful in future research in Boolean function complexity.

\begin{figure}
    \centering
    \includegraphics[width=150pt]{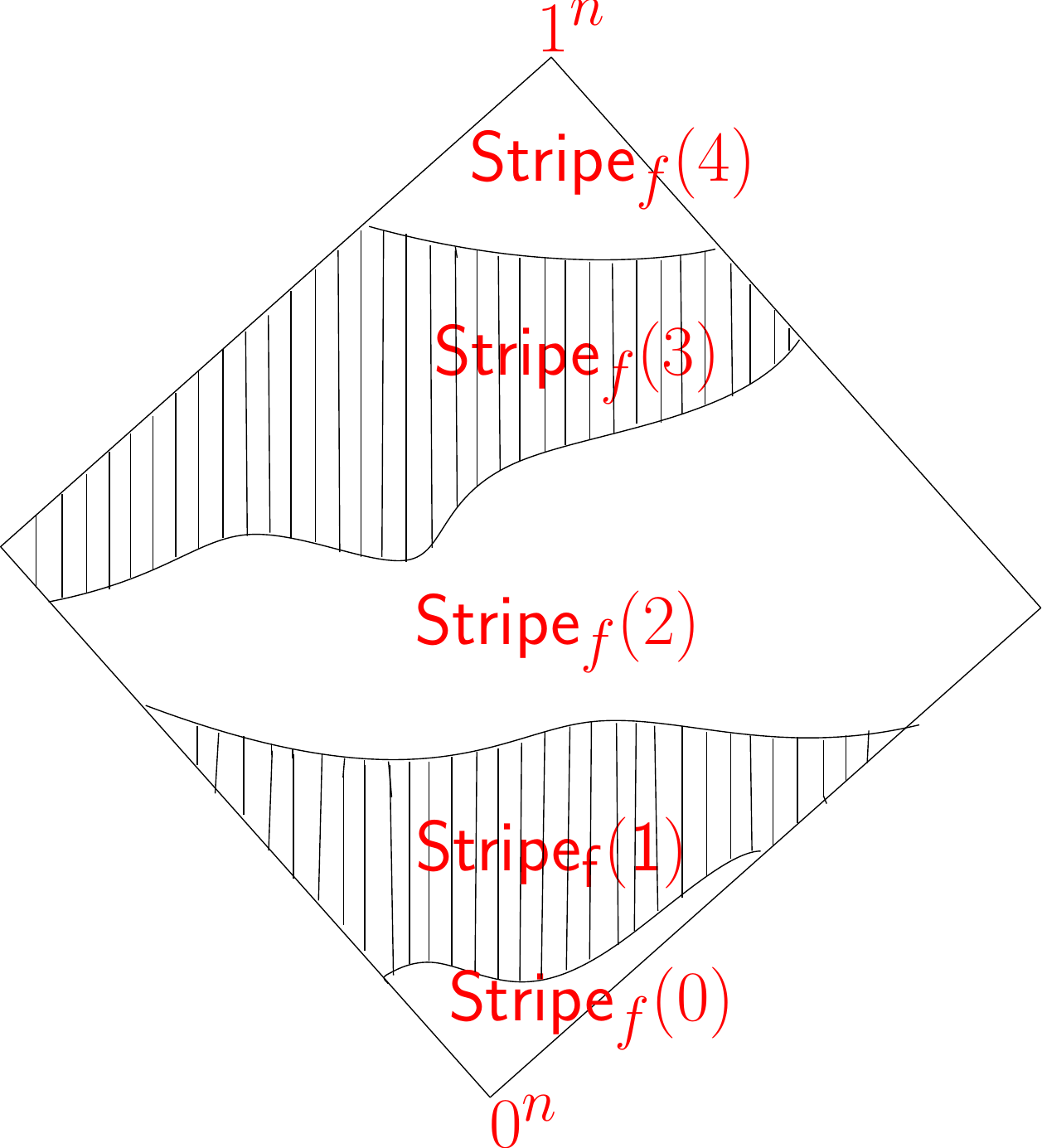}
    \caption{A zebra function with 5 stripes}
    \label[figure]{fig:zebra}
\end{figure}

We prove the following theorem.
\begin{theorem}
\label[theorem]{thm:zebra-d-bs-deg}
For every zebra function $f$, 
\begin{enumerate}
    \item $\dclass(f) = O(\bs(f)^2)$ and 
    \item $\cer(f) =O(\degree(f)^2 )$.
\end{enumerate}
\end{theorem}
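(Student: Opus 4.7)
The proof hinges on a structural decomposition. For a zebra function $f$, define the stripe index $\sigma:\{0,1\}^n\to[t]$ (where $t$ is the common number of stripes) by letting $\sigma(x)$ be the index of the run of constant $f$-value containing $x$ on any monotone path through $x$; this is well-defined by the zebra condition. The map $\sigma$ is monotone in the product order: if $x\le y$, a monotone path through $x$ and $y$ yields $\sigma(x)\le\sigma(y)$. Moreover, on any monotone path $0^n\to 1^n$ the map $\sigma$ must attain each of $1,2,\ldots,t$ exactly once in order (otherwise two paths would have different alternation counts for $f$), so every single-bit flip changes $\sigma$ by at most one. Writing $f=g\circ\sigma$ with $g:[t]\to\{0,1\}$ alternating, each threshold function $h_i(x):=[\sigma(x)\ge i]$ is a monotone Boolean function with $h_1\ge h_2\ge\cdots\ge h_t$, and one has the polynomial identity $f=g(1)+\sum_{i=2}^{t}(g(i)-g(i-1))\,h_i$ with $\pm 1$ coefficients.

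Both parts of the theorem follow a common template. A certificate for $f$ at $x$ with $\sigma(x)=j$ is the union of a $1$-certificate for $h_j$ at $x$ (pinning $\sigma\ge j$) and a $0$-certificate for $h_{j+1}$ at $x$ (pinning $\sigma\le j$); hence $\cer(f,x)\le\cer(h_j,x)+\cer(h_{j+1},x)$ and $\cer(f)\le 2\max_i\cer(h_i)$. Since each $h_i$ is monotone Boolean, a M\"obius-inversion argument shows that every minterm and every maxterm of $h_i$ is realised as a monomial of nonzero coefficient in its multilinear polynomial, yielding the clean bound $\cer(h_i)\le\degree(h_i)$.

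For Part~(1), I combine the standard inequality $\dclass(f)\le\cer(f)\cdot\bs(f)$ (from the recursive certificate/block-sensitivity algorithm recorded in the preliminaries) with a bound $\max_i\cer(h_i)=O(\bs(f))$ specific to zebra $f$. The key observation is that every minterm (resp.\ maxterm) of $h_i$ contains a bit whose toggling at a corresponding boundary input crosses the $i$-th stripe boundary, changing $\sigma$ by exactly one and hence flipping $f$; a disjoint-packing argument then upgrades this into the desired bound on the maximum minterm/maxterm size of $h_i$. For Part~(2), it remains to bound $\max_i\degree(h_i)$ by $O(\degree(f))$ (or even only $O(\degree(f)^2)$, which still suffices), invoking (a)~the observation $t\le\degree(f)+1$, since restricting $f$ to any monotone path from $0^n$ to $1^n$ gives a univariate polynomial of degree $\le\degree(f)$ whose sign changes $t-1$ times, and (b)~the nested polynomial identity above, which allows one to solve for each $h_i$ from $f$ and the remaining $h_j$'s and inductively bound its degree.

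\emph{Main obstacle.} The technical crux in both parts is precisely this last step: quantitatively relating $\cer(h_i)$ and $\degree(h_i)$ to $\bs(f)$ and $\degree(f)$. The identity $f=\sum_i\pm h_i+\text{const}$ only yields $\degree(f)\le\max_i\degree(h_i)$, the wrong direction, so the reverse bound will require carefully exploiting the nested monotone structure of the $h_i$'s together with the limit $t\le\degree(f)+1$ on the number of stripes.
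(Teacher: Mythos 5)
Your structural setup coincides with the paper's: your stripe index $\sigma$ is the paper's alternation number of a point, your thresholds $h_i$ are exactly the paper's functions $f_i$, and the ``union of a $1$-certificate for $h_j$ and a $0$-certificate for $h_{j+1}$'' step is precisely how the paper proves Part (2). However, at the quantitative core the proposal has a genuine gap, which you yourself flag: you try to control $\cer(h_i)$ via $\degree(h_i)$ and then relate $\degree(h_i)$ to $\degree(f)$, and you have no argument for that last comparison (as you note, the identity $f=\sum_i\pm h_i+\mathrm{const}$ goes the wrong way). Moreover, your proposed ingredient (a) is unsound: the restriction of $f$ to a monotone path is \emph{not} a univariate polynomial of degree $\le\degree(f)$ (a monotone path is not a line in $\mathbf{R}^n$; writing $f$ along the path gives a sum of step functions of the path position), so the bound $t\le\degree(f)+1$ is unjustified --- and in fact unnecessary. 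Your Part (1) sketch has the same soft spot: knowing that a minterm of $h_i$ ``contains a bit'' whose flip crosses the stripe boundary only shows the boundary input has sensitivity at least $1$; the unspecified ``disjoint-packing argument'' does not bound the \emph{size} of the minterm, which is what you need.

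The missing idea, which replaces both the degree detour and the packing argument, is the paper's boundary-sensitivity fact (Proposition 4.4(4)): if $y$ is a \emph{minimal} point of $\stripe_f(i)$ then flipping \emph{any} $1$-bit of $y$ lands in $\stripe_f(i-1)$ and hence flips $f$ (adjacent stripes have different values); dually for maximal points and $0$-bits. Consequently the $1$-bits of a minimal point of $\stripe_f(i)$ --- which, by monotonicity of $h_i$, form a $1$-certificate of $h_i$ at every $x$ above it --- number at most $\s(f)$, so $\cer(h_i)\le \s(f)\le\bs(f)$, and via Nisan--Szegedy $\bs(f)=O(\degree(f)^2)$ one gets $\cer(h_i)=O(\degree(f)^2)$ with no reference to $\degree(h_i)$ or to the number of stripes. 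With this in hand your Part (1) route ($\dclass\le\cer(f)\cdot\bs(f)$ together with $\cer(f)\le 2\max_i\cer(h_i)=O(\bs(f))$) does close, and is a mild variant of the paper's, which instead shows $\cminstar(f)\le\bs(f)$ (using closure of zebra functions under restriction) and applies $\dclass(f)=O(\cminstar(f)\cdot\bs(f))$. Two smaller inaccuracies to repair: a maxterm of a monotone function is not literally a monomial of its multilinear polynomial (you need to pass to the dual function to argue max maxterm size $\le$ degree), and in any case that monotone fact $\cer(h_i)\le\degree(h_i)$ becomes superfluous once the sensitivity bound above is used.
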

\Cref{thm:zebra-d-bs-deg} (1) gives a negative answer to Question~\ref{qn:q1} (1) for zebra functions. \Cref{thm:zebra-d-bs-deg} (1) is tight, as witnessed by the monotone function $\mathsf{TRIBES}_{\sqrt{n} \times \sqrt{n}}$ (see \Cref{def:tribes}). As for \Cref{thm:zebra-d-bs-deg} (2), we do not know if there exists a zebra function $f$ for which $\cer(f)$ is polynomially larger than $\degree(f)$. If $\cer(\cdot)=O(\degree(\cdot))$ holds for all zebra functions, then it follows from \Cref{prop:basic_facts} (5) that $\dclass(\cdot)=O(\degree(\cdot)^2)$ holds for all zebra functions. We leave the question of whether $\dclass(f)=O(\degree(f))^2$ is true for zebra functions $f$ as an open question.
\paragraph{Functions with bounded alternation number}
The alternation number $\alt(f)$ of a Boolean function $f$ is the maximum number of times the function value changes in any monotone path of the Boolean hypercube from $0^n$ to $1^n$. Monotone functions are exactly those functions that are $0$ on $0^n$ and have alternation number $1$. Alternation number has received attention in the past as a measure of the extent of non-monotonicity of a Boolean function. Markov \cite{M58} related it to the number of not-gates that any Boolean circuit realizing a Boolean function must use, which is yet another measure of non-monotonicity. Alternation number has also been studied in the context of various restricted circuit models \cite{SC93, ST04, M09a, M09b}, learning classes of circuits with bounded non-monotonicity \cite{BCOST15} and cryptography \cite{GMOR15}. Lin and Zhang \cite{LZ17} proved the \emph{Sensitivity Conjecture}\footnote{The Sensitivity Conjecture was later unconditionally proved by the seminal work of Huang; so it is not a conjecture any more.} and the \emph{Log-Rank Conjecture}, two important conjectures in complexity theory, for the special case of functions with constant alternation number. Krishnamoorthy and Sarma \cite{KS19} has studied relations of alternation number with various combinatorial, algebraic and analytic complexity measures of Boolean functions.

It follows from \Cref{prop:basic_facts} (4) and the fact that for monotone functions the measures certificate complexity and block sensitivity are equal \cite{BdeW02Survey} that for any monotone function $f$, $\dclass(f)=O(\bs(f)^2)$. The following theorem generalizes the  above result for the class of functions with bounded alternation. In addition, it also proves a quadratic upper bound on $\dclass(f)$ in terms of $\degree(f)$ for functions with constant alternation number. In particular, it gives a negative answer to Question~\ref{qn:q1} (1) and Question~\ref{qn:q1} (2) for functions with constant alternation number. 
\begin{theorem}
\label[theorem]{theorem:k-alternation-d-bs}
For every Boolean function $f:\{0,1\}^n\rightarrow\{0,1\}$,
\begin{enumerate}
\item $\dclass(f) = O(\alt(f)\cdot \bs(f)^2)$ and 
\item $\dclass(f) = O(\alt(f)\cdot \deg(f)^2)$.
\end{enumerate}
In particular, if $\alt(f)=O(1)$, then $\dclass(f) = O(\bs(f)^2)$ and $\dclass(f) = O(\deg(f)^2)$.

\end{theorem}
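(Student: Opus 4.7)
The plan is to prove the unified bound $\cer(f) = O(k \cdot M(f))$ for $M \in \{\bs, \deg\}$ and $k = \alt(f)$, by induction on $k$, and then conclude by invoking the standard relations of \Cref{prop:basic_facts}. The $k \leq 1$ base cases are the constant and monotone cases, where $\cer(f) = \bs(f) \leq \deg(f)$ is well known.

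For the inductive step, assume WLOG $f(0^n) = 0$. The key structural observation is that any minimal $1$-input $y$ of $f$ satisfies $|\text{supp}(y)| \leq M(f)$: by minimality, the restriction of $f$ to $\{z : z \leq y\}$ is the $|\text{supp}(y)|$-variable AND function, which has block sensitivity and degree both equal to $|\text{supp}(y)|$, and restriction never increases $\bs$ or $\deg$. Now given $x$ with $f(x) = 1$, pick such a minimal $y \leq x$ and let $\mathcal{C}$ be the subcube obtained by fixing every coordinate of $\text{supp}(y)$ to $1$. Then $\mathcal{C} \ni x$ has codimension at most $M(f)$, and we claim $\alt(f|_\mathcal{C}) \leq k - 1$: any monotone path inside $\mathcal{C}$ from $0_\mathcal{C} = y$ to $1_\mathcal{C}$ extends to a full-cube monotone path $0^n \to y \to 1_\mathcal{C} \to 1^n$, on which $f$ contributes at least one alternation during $0^n \to y$ (since $f(0^n) = 0$ and $f(y) = 1$), leaving at most $k - 1$ alternations for the $\mathcal{C}$-portion. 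By the inductive hypothesis applied to $f|_\mathcal{C}$, there is a $1$-certificate for $x$ in $f|_\mathcal{C}$ of codimension $(k - 1) M(f|_\mathcal{C}) \leq (k - 1) M(f)$, which combined with the $M(f)$ coordinates already fixed gives a $1$-certificate for $x$ in $f$ of size at most $k \cdot M(f)$.

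The above argument handles $1$-inputs directly. For $0$-inputs the construction is symmetric whenever $f(1^n) = 1$: applying the above to the bit-complemented function $x \mapsto \neg f(\bar x)$ (which has the same $\alt$, $\bs$, and $\deg$ as $f$ and evaluates to $0$ at $0^n$), a $1$-certificate for $\bar x$ in that function yields a $0$-certificate for $x$ in $f$ of the same codimension. When $f(1^n) = 0$, both endpoints of the hypercube are $0$-inputs and the situation is more delicate; one splits on whether the monotone closure $g(x) := \max_{y \leq x} f(y)$ evaluates to $0$ or $1$ at $x$, combining the minimal-$1$-input construction with a dual ``maximal-$0$-input from above'' construction to cover both sub-cases. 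This $0$-input analysis, together with verifying that the supporting monotone-closure complexity measures are preserved in the required way, is the main technical obstacle. Once $\cer(f) = O(k \cdot M(f))$ is established, part (1) follows from $\dclass(f) = O(\cer(f) \cdot \bs(f))$ (\Cref{prop:basic_facts} (4)), and part (2) follows from the analogous Midrijanis-style bound $\dclass(f) = O(\cer(f) \cdot \deg(f))$ captured by \Cref{prop:basic_facts}, yielding the claimed $O(k \cdot M(f)^2)$ bounds.
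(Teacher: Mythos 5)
There is a genuine gap, and you name it yourself: the key lemma $\cer(f)=O(\alt(f)\cdot M(f))$ is actually proved only for inputs whose value differs from $f(0^n)$ (and, dually, from $f(1^n)$); for inputs $x$ with $f(x)=f(0^n)=f(1^n)$ you only offer a sketch via the monotone closure $g(z)=\max_{y\leq z}f(y)$. This case is not a corner case that can be deferred: your induction needs the full bound $\cer(\cdot)\leq (k-1)M(\cdot)$ for the restriction $f\mid_{\mathcal C}$, and already after one step the bottom point of $\mathcal C$ is a $1$-input (it is the minimal $1$-input $y$), so the input to be certified shares its value with an endpoint of the restricted cube and you land exactly in the unhandled configuration. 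Moreover, the sketch itself is incomplete in a substantive way: a $0$-certificate of $g$ at $x$ is indeed a $0$-certificate of $f$ (since $f\leq g$), but you would then need $\s(g)$ or $\bs(g)$ to be controlled by $\bs(f)$ (resp.\ $\degree(f)$), and the monotone closure can a priori blow these measures up; nothing in the proposal establishes such control. A smaller issue: for part (2) you invoke $\dclass(f)=O(\cer(f)\cdot\degree(f))$ as a fact from \Cref{prop:basic_facts}, but that proposition only lists $\dclass=O(\cer\cdot\bs)$ and $\dclass=O(\degree\cdot\bs)$; the bound you want is true but in this paper it comes from \Cref{lemma:d_n_cstar} (since $\cminstar\leq\cer$), not from the listed facts.

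The paper avoids your obstacle entirely by never trying to certify every input. It only bounds the \emph{minimum} certificate complexity: a greedy walk up the cube (repeatedly jumping to a minimum-weight point of opposite value, which is exactly your minimal-$1$-input step iterated) shows $\cmin(f)\leq \alt(f)\cdot\s(f)$ and $\cmin(f)\leq\alt(f)\cdot\degree(f)$; since $\alt$, $\bs$ and $\degree$ do not increase under restriction, this gives $\cminstar(f)\leq\alt(f)\cdot\bs(f)$ and $\cminstar(f)\leq\alt(f)\cdot\degree(f)$, and the theorem follows from the strengthened \Cref{lemma:d_n_cstar}, $\dclass=O(\cminstar\cdot\bs)$ and $\dclass=O(\cminstar\cdot\degree)$. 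Your $1$-input argument is sound (and your observation that a minimal $1$-input has support at most $\degree(f)$, via the AND restriction, is a clean substitute for the paper's maximal-monomial step), so the quickest repair is to redirect it: iterate it from $0^n$ to bound $\cmin$ of $f$ and of every restriction, and then invoke \Cref{lemma:d_n_cstar} instead of the $\cer$-based facts; whether $\cer(f)=O(\alt(f)\cdot\bs(f))$ holds in general is a stronger claim that your argument does not establish.
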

As stated in the preceding section, $\mathsf{TRIBES}_{\sqrt{n} \times \sqrt{n}}$ witnesses the tightness of \Cref{{theorem:k-alternation-d-bs}} (1).
\subsubsection{Decision trees and communication protocols}
 In our second category of results, we reduce the task of proving improved upper bounds on $\dclass(\cdot)$ for the class of all functions to potentially easier tasks of bounding smaller measures.
 
 Consider a bi-partition $(x^{(1)}, x^{(2)})$ of the bits of $x$ into two parts consisting of $n_1$ and $n_2$ bits respectively. In the two-party communication model introduced by Yao \cite{A79} there are two communicating parties Alice and Bob, holding $x^{(1)}$ and $x^{(2)}$ respectively. The parties wish to compute $f(x)$ by communicating bits. The objective is to accomplish this by communicating as few bits as possible. The \emph{communication complexity} of $f$, denoted by $\CC(f)$, is the number of bits that need to be exchanged in the worst case by any protocol that computes $f$. See Section~\ref{prelims} for more details.
 
 Suppose $f$ has a shallow decision tree $T$ of depth $d$. Then there also exists an efficient communication protocol that computes $f$, in which the parties simulate $T$ . Each query that $T$ makes can be answered by $1$ bit of communication. Thus, the protocol exchanges $d$ bits in the worst case. This gives us that $\CC(f)\leq \dclass(f)$.

We make the following observation. 
\begin{observation}
\label[observation]{obs:cc<D}\
\begin{enumerate}
    \item If $\dclass(f)= O(\bs(f)^2)$, then $\CC(f)=O(bs(f)^2)$, and 
    \item if $\dclass(f)= O(\degree(f)^2)$, then $\CC(f)=O(\degree(f)^2)$.
\end{enumerate}
\end{observation}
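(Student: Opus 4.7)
The plan is to derive both parts of the observation as immediate consequences of the already-noted inequality $\CC(f) \leq \dclass(f)$. Recall the reasoning for this inequality, which is sketched in the paragraph preceding the observation: fix any bi-partition of the input variables into Alice's part $x^{(1)}$ and Bob's part $x^{(2)}$, and let $T$ be an optimal deterministic decision tree computing $f$, of depth $\dclass(f)$. Alice and Bob simulate $T$ root-to-leaf: whenever the current node queries a variable in $x^{(1)}$, Alice determines the answer locally at no cost; whenever it queries a variable in $x^{(2)}$, Bob sends its value in a single bit. The total communication across any input is bounded by the depth of $T$, namely $\dclass(f)$. Since this simulation works for every bi-partition of the variables, we obtain $\CC(f) \leq \dclass(f)$.

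With this in hand, the two parts of the observation reduce to a single substitution each. For part (1), under the hypothesis $\dclass(f) = O(\bs(f)^2)$, chaining gives $\CC(f) \leq \dclass(f) = O(\bs(f)^2)$. For part (2), under the hypothesis $\dclass(f) = O(\degree(f)^2)$, the same chain yields $\CC(f) \leq \dclass(f) = O(\degree(f)^2)$. No additional machinery is invoked, and no case analysis is needed.

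There is no genuine technical obstacle here; the only subtlety worth flagging is that the statement is to be read per function, so the hypothesis and conclusion refer to the same $f$, and in the case of $\CC$ the resulting bound must be valid for every bi-partition of the variables---which is exactly what the simulation argument above supplies uniformly. The substantive mathematical content of this section, as the authors preview, lies in the \emph{converse} direction: showing that obtaining an improved upper bound on a communication-type measure for every bi-partition in fact suffices to obtain the corresponding improved upper bound on $\dclass$. That converse is the non-trivial reduction, and it is handled separately via the G{\"o}{\"o}s--Pitassi--Watson lifting theorem rather than by any elementary simulation.
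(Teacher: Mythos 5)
Your proposal is correct and takes essentially the same route as the paper: the observation is nothing more than substituting the hypotheses into the inequality $\CC(f) \leq \dclass(f)$, which the paper establishes in the preceding paragraph by the same decision-tree simulation you describe. One small inaccuracy in your protocol description: Alice cannot answer her queries ``locally at no cost,'' because Bob must also learn which branch is taken in order to track the current node (in particular, to know which of his variables is queried next); in the standard simulation the owner of the queried variable announces its value, so \emph{every} query costs one bit and the total is at most $\dclass(f)$ --- which is all your argument needs, so the conclusion is unaffected.
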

The question we ask is whether a converse of Observation~\ref{obs:cc<D} holds. We show that a ``approximate converse" of Observation~\ref{obs:cc<D} holds, up to a factor of $\mathsf{polylog}(n)$, when the antecedents and consequents are universally quantified.

\begin{theorem}
\label[theorem]{thm:cc-deg-bs-implications}\
\begin{enumerate}
    \item If for all Boolean functions $f$ and all bi-partitions of the input bits $\CC(f) = O(\deg(f)^2\cdot \mathsf{polylog}(n))$ then for all $f$, $\dclass(f) =O(\deg(f)^2\cdot \mathsf{polylog}(n))$. 
    \item If for all Boolean functions $f$ and all bi-partitions of the input bits $\CC(f) =O(\bs(f)^2\cdot \mathsf{polylog}(n))$ then for all $f$ $\dclass(f) =O(\fbs(f)^2\cdot \mathsf{polylog}(n))$. 
\end{enumerate}
\end{theorem}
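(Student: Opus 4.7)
The plan is to use the deterministic lifting theorem of G{\"{o}}{\"{o}}s, Pitassi and Watson (GPW) to turn a decision-tree lower bound for an arbitrary $f$ into a communication lower bound for a lifted function, and then feed the lifted function into the hypothesis. Fix $f:\{0,1\}^n \to \{0,1\}$. Let $\ind_k$ denote the indexing gadget with $k$ data bits and $\log k$ address bits, for $k = \mathrm{poly}(n)$ chosen large enough that GPW applies. Set $F := f \circ \ind_k^n$, a Boolean function on $N = O(nk) = \mathrm{poly}(n)$ bits. Consider the natural bi-partition in which Alice holds the $n$ data blocks and Bob holds the $n$ address blocks. GPW gives
$$\CC(F) \;=\; \Theta\bigl(\dclass(f)\cdot \log k\bigr).$$

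Next, I would bound $\deg(F)$ and $\bs(F)$ in terms of measures of $f$. For degree, the standard block-composition bound yields $\deg(F) \le \deg(f)\cdot \deg(\ind_k) = O(\deg(f)\cdot \log k)$. For block sensitivity, the relevant composition inequality is $\bs(f\circ g) = O(\fbs(f)\cdot \bs(g))$, proved by a fractional-cover/LP-duality argument that combines an optimal fractional set of sensitive blocks of $f$ with sensitive blocks of $g$ on individual coordinates. Specialized to $g=\ind_k$, this gives $\bs(F) = O(\fbs(f)\cdot \log k)$.

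Combining these ingredients proves the theorem. Under the hypothesis of part (1), applied to $F$ on the bi-partition above,
$$\dclass(f)\cdot \log k \;=\; \Theta(\CC(F)) \;=\; O\bigl(\deg(F)^2 \cdot \mathsf{polylog}(N)\bigr) \;=\; O\bigl(\deg(f)^2 \cdot (\log k)^2 \cdot \mathsf{polylog}(n)\bigr),$$
and dividing by $\log k$ and absorbing $\log k$ into $\mathsf{polylog}(n)$ (since $k = \mathrm{poly}(n)$) yields $\dclass(f) = O(\deg(f)^2 \cdot \mathsf{polylog}(n))$. The identical calculation using the $\bs$ hypothesis and the composition bound for $\bs$ gives $\dclass(f) = O(\fbs(f)^2 \cdot \mathsf{polylog}(n))$, proving part (2).

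The main obstacle is twofold. First, one must pick a version of the GPW lifting theorem compatible with a gadget of size $k = \mathrm{poly}(n)$ so that $\log k = \Theta(\log n)$ and $\mathsf{polylog}(N) = \mathsf{polylog}(n)$; this is a calibration issue but is routine once the right statement is isolated. Second, part (2) crucially uses the composition inequality $\bs(f\circ g) = O(\fbs(f)\cdot \bs(g))$ with $\fbs(f)$ rather than $\bs(f)$ on the right-hand side — this asymmetry is exactly what forces the conclusion to be stated in terms of $\fbs$, and verifying the correct form of this inequality (and that it cannot be strengthened to use $\bs(f)$ without losing polynomial factors in general) is the delicate point of the argument.
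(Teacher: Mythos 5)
Your proposal is correct and follows essentially the same route as the paper: lift $f$ through the GPW indexing gadget, apply the hypothesis to $F=f\circ\ind_{m+2^m}$ on the natural bi-partition, and translate $\degree(F)$ and $\bs(F)$ back to $\degree(f)$ and $\fbs(f)$ respectively. The only difference is that the paper proves the key bound $\bs(f\circ\ind_{m+2^m})\leq (m+1)\cdot\fbs(f)$ directly from the gadget's structure (every minimal sensitive block meeting a copy must contain an address bit or the addressed target bit of that copy, so at most $m+1$ disjoint blocks touch any copy, and the projected blocks give a feasible point of the $\fbs$ LP), whereas you invoke a general inequality $\bs(f\circ g)=O(\fbs(f)\cdot\bs(g))$ as a black box; that inequality is true by essentially the same projection argument, and for this gadget even the standard composition bound $\fbs(f\circ g)\leq\fbs(f)\cdot\fbs(g)$ with $\fbs(\ind_{m+2^m})=m+1$ would suffice, so the only thing missing is to actually supply that proof rather than cite it.
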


\Cref{thm:cc-deg-bs-implications} opens up the possibility of deriving improved upper bounds on $\dclass(\cdot)$ by designing cheap communication protocols.\footnote{In fact, it follows from our proof that it is sufficient to design cheap protocols only for functions of the form $f \circ \ind_{c \log n+n^c}$ for some constant $c>0$, where $\ind_{m+2^m}$ is the indexing function on $m+2^m$ bits (see \Cref{def:ind}).} For instance, to prove Aaronson's conjecture (up to $\mathsf{polylog}(n)$) that the randomized query complexity is at most the fractional block sensitivity \cite{aaronson2008quantum} it is sufficient to show that the communication complexity is at most square of the block sensitivity times $\mathsf{polylog}(n)$. Thus, our task is reduced that of bounding $\CC(\cdot)$. The latter is a potentially easier task; there are Boolean functions $f$ for which $\CC(f)<<\dclass(f)$. For example, for the majority function $\mathsf{MAJ}_n$ on $n$ bits (see \Cref{def:maj}), $\CC(\mathsf{MAJ}_n) \leq \lceil \log n\rceil$ for any bi-partition of its input bits, but $\dclass(\mathsf{MAJ}_n)=n$. Furthermore, there are upper bounds on $\CC(f)$ by measures that are bounded above by $\dclass(f)$, and are far lower than $\dclass(f)$ for many $f$. One such measure is the \emph{decision tree rank} of $f$, denoted by $\rank(f)$ (See \Cref{def:rank}).

Decision tree rank was introduced by Ehrenfeucht and Haussler \cite{AD89} in the context of learning (also see \cite{ABDORU10}). More recently, Dahiya and Mahajan studied decision tree rank in the context of complexity measures of Boolean functions \cite{DM21}.

Decision tree rank is defined in terms of decision trees and is bounded above by $\dclass(\cdot)$. Furthermore, it can be arbitrarily smaller than $\dclass(\cdot)$. For example, for the AND function on $n$ bits, $\dclass(\cdot)=n$ and $\rank(\cdot)=1$.

We show (see \Cref{proposition:lifting}) that $\rank(\cdot)$ is an upper bound on $\CC(\cdot)$ up to a factor of $\log n$. We prove the proposition by relating rank to the \emph{subcube decision tree complexity}, (see \Cref{def:detsubcube}), which is the least depth of a decision tree that computes $f$ by querying indicator functions of subcubes at its internal nodes. Such a tree is efficiently simulable by a communication protocol.

There are powerful decision tree models that can be efficiently simulated by communication protocols; one such model is the subcube decision tree mentioned before. Another well-known model is the \emph{parity decision tree} (see \Cref{def:detparity}), which is a decision tree that computes $f$ by querying the parity of various subsets of bits of the input string at its internal nodes \cite{M09, THWX13, STV14}. A parity decision tree can be efficiently simulated by a communication protocol. Each parity query can be answered by an exchange of two bits. Thus $\CC(\cdot)=O(\pdt(\cdot))$. \Cref{thm:cc-deg-bs-implications} implies that bounds on the parity decision tree complexity (denoted by $\pdt(\cdot)$) of the class of all Boolean functions translate to comparable bounds on the decision tree complexity of all Boolean functions.

The following theorem is a direct consequence of \Cref{thm:cc-deg-bs-implications}, \Cref{proposition:lifting}, and the preceding discussion.
\begin{theorem}
\label[theorem]{thm:corollary_of_main_cc}
Let $\mathcal{M} \in \{\rank, \pdt\}$. Then
\begin{enumerate}
    \item If for all Boolean functions $f$,  $\mathcal{M}(f) = O(\deg(f)^2\cdot \mathsf{polylog}(n))$ then for all $f$, $\dclass(f) =O( \degree(f)^2\cdot \mathsf{polylog}(n))$.
    \item If for all Boolean functions $f$, $\mathcal{M}(f) = O(\bs(f)^2 \cdot \mathsf{polylog}(n))$ then for all $f$, $\dclass(f) =O( \fbs(f)^2\cdot \mathsf{polylog}(n))$. 
\end{enumerate}
\end{theorem}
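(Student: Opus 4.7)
The plan is to obtain \Cref{thm:corollary_of_main_cc} by composing three facts that are already in place by the time the theorem is stated: (i) \Cref{thm:cc-deg-bs-implications}, which reduces the task of upper bounding $\dclass(\cdot)$ (respectively, $\fbs(\cdot)$) for all Boolean functions to the potentially easier task of upper bounding $\CC(\cdot)$ for all functions and all bi-partitions of the input bits; (ii) \Cref{proposition:lifting}, which shows that $\CC(f) = O(\rank(f)\cdot \log n)$ for every bi-partition; and (iii) the standard simulation $\CC(f) \le 2\,\pdt(f)$ for every bi-partition, noted in the preceding discussion, since a single parity query over a set $S\subseteq [n]$ is answered by Alice sending the parity of her bits in $S$ and Bob replying with the parity of his bits in $S$.

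Given these ingredients, I would proceed symmetrically in the two items. Fix $\mathcal{M}\in\{\rank,\pdt\}$ and assume the hypothesis of item (1), namely $\mathcal{M}(f)=O(\deg(f)^2\cdot \mathsf{polylog}(n))$ for every Boolean function $f$. Combining this with (ii) or (iii) as appropriate, and observing that the $\log n$ overhead coming from the rank-to-communication passage is absorbed into $\mathsf{polylog}(n)$ (while the parity-to-communication passage is costless), yields $\CC(f) = O(\deg(f)^2 \cdot \mathsf{polylog}(n))$ for every Boolean function $f$ and every bi-partition of its input bits. This is precisely the antecedent of \Cref{thm:cc-deg-bs-implications} (1), whose application produces the desired conclusion $\dclass(f) = O(\deg(f)^2 \cdot \mathsf{polylog}(n))$. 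The proof of item (2) is identical in form, with $\deg$ replaced by $\bs$ in the hypothesis and by $\fbs$ in the conclusion, and with \Cref{thm:cc-deg-bs-implications} (2) invoked in place of (1).

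There is essentially no real obstacle here: the substantive content has already been absorbed into \Cref{thm:cc-deg-bs-implications}, which packages the lifting from communication complexity to decision tree complexity, and into \Cref{proposition:lifting}, which controls the rank-to-communication step via subcube decision trees. The only routine check is that the $\mathsf{polylog}(n)$ slack present on either side of each implication is wide enough to swallow the additional $\log n$ factor incurred while moving from $\rank$ to $\CC$, which is immediate.
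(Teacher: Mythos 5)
Your proposal is correct and follows essentially the same route as the paper: pass from the hypothesis on $\rank$ or $\pdt$ to a communication complexity bound for every bi-partition via \Cref{proposition:lifting} (for $\rank$, through subcube decision trees, absorbing the $\log n$ factor) or the two-bit simulation of parity queries (for $\pdt$), and then invoke \Cref{thm:cc-deg-bs-implications}. No gaps.
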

\Cref{thm:cc-deg-bs-implications} is proved by a lifting theorem for communication complexity of composed functions where the inner function is the indexing function on $\Theta(\log n)$ bits (see \Cref{sec:proof-techniques} and \Cref{sec:communication}). The arity of the indexing gadget is the source of the $\mathsf{polylog}(n)$ factors in $\Cref{thm:cc-deg-bs-implications}$. A lifting theorem with a gadget of constant size is a major open problem in communication complexity, with many interesting consequences. It is apparent from our proof of \Cref{thm:cc-deg-bs-implications} that such a lifting theorem will remove the polylogarithmic factors. Our study thus strengthens the case for a lifting theorem with constant sized inner function.
\subsection{Proof Techniques}
\label[section]{sec:proof-techniques}
In this section we discuss at a high level the key ideas of our proofs.
\paragraph{Graph properties:} To prove \Cref{lemma:graph_key}, we assume, by complementing the property $\cP$ if necessary, that the empty graph is not in the property. Then we split the proof into two parts based on the smallest number of edges $m$ of any graph $G=(V,E)$ in the property.

If $m>n/4$, then we consider the restriction of $\cP$ where each $x_{\{u,v\}}$ such that $\{u,v\}\notin \cP$ is set to $0$. We observe that the restricted function is the logical AND of the unset variables, and hence has block sensitivity and degree both equal to $m=\Omega(n)$.

On the other hand, if $m\leq n/4$, then we show that by carefully restricting and identifying variables, it is possible to turn $\cP$ into a non-trivial symmetric function on $\Omega(n)$ variables which, by known results, has both degree and block sensitivity $\Omega(n)$ (see \Cref{prop:basic_facts} (9)). In this part of the proof, we crucially use the invariance of $\cP$ under permutation of vertices.
\paragraph{Zebra functions and functions with bounded alternation:}
In the proofs of \Cref{thm:zebra-d-bs-deg} (1) as well as \Cref{theorem:k-alternation-d-bs} (1) and (2), the following lemma plays a central role.
\begin{lemma}
\label[lemma]{lemma:d_n_cstar}
For any Boolean function $f:\{0,1\}^n\rightarrow \{0,1\}$, 
\begin{enumerate}
\item $\dclass(f)=O(\cminstar(f)\cdot\bs(f))$,
\item $\dclass(f)=O(\cminstar(f)\cdot\degree(f))$.
\end{enumerate}
\end{lemma}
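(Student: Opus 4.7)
The plan is to construct, for an arbitrary subcube $\mathcal{C}$, a decision tree computing $f|_{\mathcal{C}}$ recursively, and to induct on an appropriate complexity measure. At the node associated with $\mathcal{C}$, if $f|_{\mathcal{C}}$ is constant we output the value; otherwise, since $\cmin(f|_{\mathcal{C}}) \le \cminstar(f) =: c^\star$ by definition, there is an input $x \in \mathcal{C}$ whose minimum certificate $S$ (in $f|_{\mathcal{C}}$) has size at most $c^\star$. I would query the bits of $S$, contributing at most $c^\star$ to the depth; the unique branch consistent with $x|_S$ outputs $v := f(x)$, and on each remaining branch we land in a sub-subcube $\mathcal{C}' \subsetneq \mathcal{C}$ in which $S$ is fixed to some $\alpha \neq x|_S$, where we recurse.

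For part (2), I would prove $\degree(f|_{\mathcal{C}'}) \le \degree(f|_{\mathcal{C}}) - 1$ for every non-matching branch. Writing $f|_{\mathcal{C}}$ as a multilinear polynomial $P$, the fact that $P \equiv v$ on $\{y : y|_S = x|_S\}$ yields a telescoping decomposition along the bits of $S$,
\[
P(y) - v \;=\; \sum_{i \in S}\bigl(y_i - (x|_S)_i\bigr)\, g_i(y),
\]
where each $g_i$ arises as a discrete partial derivative of $P$ and has degree at most $\degree(P) - 1$. Evaluating at any $\alpha \neq x|_S$ gives a polynomial of degree $\le \max_i \degree(g_i) \le \degree(P) - 1$, and induction on $\degree(f|_{\mathcal{C}})$ yields $\dclass(f) = O(\cminstar(f) \cdot \degree(f))$.

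For part (1), the naive analog fails because the maximum-$\bs$ input in $\mathcal{C}'$ might still have value $v$. My plan is to replace $\bs$ by the potential
\[
\Phi(f|_{\mathcal{C}}) \;:=\; \bs_0(f|_{\mathcal{C}}) + \bs_1(f|_{\mathcal{C}}),
\]
where $\bs_b(g) := \max\{\bs(g, y) : g(y) = b\}$, and to show that $\Phi$ strictly decreases on every non-matching branch. Monotonicity under restriction gives $\bs_v(f|_{\mathcal{C}'}) \le \bs_v(f|_{\mathcal{C}})$ for free. For $\bs_{1-v}$, take any $y \in \mathcal{C}'$ with $f(y) = 1-v$ and define $B^\star := \{i \in S : y_i \neq (x|_S)_i\}$, which is nonempty because $\alpha \neq x|_S$; flipping $B^\star$ in $y$ lands in the certificate subcube of $x$ where $f = v$, so $B^\star$ is sensitive for $y$ in $f|_{\mathcal{C}}$ and is disjoint from every sensitive block of $y$ in $f|_{\mathcal{C}'}$ (each such block lies outside $S$). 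Hence $\bs(f|_{\mathcal{C}}, y) \ge \bs(f|_{\mathcal{C}'}, y) + 1$, and maximizing over such $y$ yields $\bs_{1-v}(f|_{\mathcal{C}'}) \le \bs_{1-v}(f|_{\mathcal{C}}) - 1$. Inducting on $\Phi$ then gives $\dclass(f|_{\mathcal{C}}) \le c^\star \cdot \Phi(f|_{\mathcal{C}}) \le 2 c^\star \cdot \bs(f|_{\mathcal{C}})$, i.e. $\dclass(f) = O(\cminstar(f) \cdot \bs(f))$. The main obstacle here is precisely picking the right potential: a direct induction on $\bs$ breaks when ``same-value'' maximum-$\bs$ inputs block the drop in $\mathcal{C}'$, and replacing $\bs$ by the two-sided potential $\Phi$ absorbs this issue at the cost of only a factor of two.
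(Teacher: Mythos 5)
Your proposal is correct and takes essentially the same route as the paper: the identical greedy tree that repeatedly queries a minimum-codimension certificate of the current restriction (each round costing at most $\cminstar(f)$), with one unit of progress per round, giving the bounds $2\,\cminstar(f)\cdot\bs(f)$ and $\cminstar(f)\cdot\degree(f)$. The differences are only in how the per-round progress is certified—the paper bounds the number of rounds by $\bs_0(f)+\bs_1(f)$ via a counting/contradiction argument using the same disjoint-sensitive-block observation underlying your potential $\Phi=\bs_0+\bs_1$, and it gets the degree decrement from the fact that every certificate meets every maximum-degree monomial rather than from your telescoping discrete-derivative identity; both of your substitute arguments are sound.
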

\Cref{lemma:d_n_cstar} is proved by a natural adaptation of the proofs \Cref{prop:basic_facts} (3) and (4) \cite{BdeW02Survey, M04}. We provide a proof of \Cref{lemma:d_n_cstar} in Appendix~\ref{sec:cminstar_key}. In light of \Cref{lemma:d_n_cstar}, the proofs of the aforementioned results boil down to bounding the maximin certificate complexity of the functions in the respective classes. To prove \Cref{thm:zebra-d-bs-deg} (2), we bound the number of bits that need to be revealed to certify the stripe a specific input belongs to. To do that, we consider the monotone and anti-monotone functions obtained by treating the boundary of a specific stripe as a threshold, bound the certificate complexities of these functions, and show how those certificates can be put together to certify membership of a certain input in a specific stripe.
\paragraph{Communication complexity:} The proof of \Cref{thm:cc-deg-bs-implications} is based on a lifting theorem for deterministic communication complexity (see \Cref{{proposition:lifting-gpw}}), which essentially asserts that there for every function $g$, the deterministic communication complexity of $g \circ \ind_{c \log n + n^c}$ is asymptotically the same as $\log n \cdot \dclass(g)$, where $\ind_{m+2^m}$ is the indexing function on $m+2^m$ bits (see \Cref{def:ind}). The hypotheses of the two parts of \Cref{thm:cc-deg-bs-implications}, coupled with the lifting theorem, immediately implies upper bounds on $\dclass(g)$ in terms of the block sensitivity and degree of $g \circ \ind_{c\log n+n^c}$. We finish the proof by bounding the block sensitivity and degree of $g \circ \ind_{c\log n+n^c}$ in terms of the fractional block sensitivity and degree of $g$ respectively.
\subsection{Organization of the paper}
In \Cref{prelims} we lay down some notations that are used in this paper, and give some definitions. In \Cref{sec:graph} we  define graph properties, set some notations and prove \Cref{thm:graph_P}. In \Cref{sec:zebra} we formally define zebra functions and related concepts, state some basic properties of zebra functions, and prove \Cref{thm:zebra-d-bs-deg}. In \Cref{sec:k-alt} we prove the result about functions with bounded alternation number (\Cref{theorem:k-alternation-d-bs}). In \Cref{sec:communication} we prove our main result (\Cref{thm:cc-deg-bs-implications}) pertaining to connections of Question~\ref{qn:q1} to communication complexity and other decision tree models. The appendices contain missing proofs and a section on preliminaries.
\section{Definitions and Preliminaries}
\label[section]{prelims}
For any real number $t>0$, $\log t$ stands for the logarithm of $t$ to the base $2$. We denote the set of all integers by $\mathbf{Z}$. For a natural number $N$, $[N]$ denotes the set $\{1,\ldots,N\}$. Let $x=(x_1,\ldots, x_n), y=(y_1,\ldots,y_n) \in \{0,1\}^n$. We say that $x \leq y$ (resp. $x\geq y$) if for each $i \in [n]$, $x_i\leq y_i$ (resp. $x_i \geq y_i$). We say that $x<y$ (resp. $x>y$) if $x\leq y$ (resp. $x \geq y$) and $x \neq y$. $|x|$ denotes the \emph{Hamming weight} of $x$ defined as $\sum_{i=1}^n x_i$. A $1$-bit (resp. $0$-bit) of $x$ is an index $i\in[n]$ such that $x_i=1$ (resp. $x_i=0$). For $S \subseteq [n]$, $\{0,1\}^S$ denotes the set of all binary strings whose indices correspond to elements of $S$. For a set of indices $B \in [n]$, we denote the string obtained from $x$ by negating the bits in the locations with indices in $B$ by $x^{\oplus B}$. If $B=\{i\}$ is singleton, we abuse notation and write $x^{\oplus B}$ as $x^{\oplus i}$.
\begin{definition}[Monotone increasing path or monotone path, alternation number of monotone paths and functions]
Let $p, q \in \{0,1\}^n$ such that $p\leq q$. A \emph{monotone  increasing path} from $p$ to $q$ in the Boolean hypercube $\{0, 1\}^n$ is a sequence $(x^{(1)}, \ldots, x^{(k)})$ such that
\begin{itemize}
    \item Each element $x^{(i)}$ of the sequence is a bit string $(x^{(i)}_1, \ldots, x^{(i)}_n)\in \{0, 1\}^n$, for $i=1,\ldots, k$,
    \item $p=x^{(1)}$ and $q=x^{(k)}$,
    \item for each $i=1,\ldots, k-1$, there exists a $j \in [n]$ such that
    \begin{itemize}
        \item $x^{(i)}_j=0$ and $x^{(i+1)}_j=1$,
        \item $x^{(i)}_\ell=x^{(i+1)}_\ell$ for each $\ell \in [n] \setminus \{j\}$.
    \end{itemize}
\end{itemize}
We refer to such a path simply as a \emph{monotone path} for convenience.

Let $f:\{0,1\}^n \rightarrow \{0, 1\}$ be a Boolean function. The \emph{alternation number} of a monotone path $P=(x^{(1)}, \ldots, x^{(k)})$ with respect to $f$, denoted by $\alt_f(P)$, is the number of times $f$ changes value on the path. Formally,
\[\alt_f(P)=|\{i \in [k-1] \mid f(x^{(i)})\neq f(x^{(i+1)})\}|.\]
We often drop the subscript $f$ from the notation and denote alternation number of $P$ simply by $\alt(P)$ when the function is clear from the context.

The alternation number of $f$, denoted by $\alt(f)$, is defined to be the maximum alternation number of any monotone path from $0^n$ to $1^n$ with respect to $f$. 
\end{definition}
\begin{observation}
\label[observation]{obs:path}
Let $x,y \in \{0,1\}^n$ and $x \leq y$. Then there exists a monotone path from $x$ to $y$.
\end{observation}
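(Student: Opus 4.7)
The plan is to construct the monotone path explicitly by flipping, one at a time, precisely those coordinates where $x$ and $y$ disagree. Let $D = \{i \in [n] : x_i = 0, y_i = 1\}$; since $x \le y$, this is exactly the set of indices on which $x$ and $y$ differ. Write $D = \{j_1, j_2, \ldots, j_k\}$ in any fixed order (say, increasing), and set $k = |D| = |y| - |x|$.

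Define the sequence $(x^{(1)}, \ldots, x^{(k+1)})$ by $x^{(1)} = x$ and, for each $i \in [k]$, $x^{(i+1)} = (x^{(i)})^{\oplus j_i}$. I would then verify the three bullet-pointed requirements of the definition of a monotone path: that each element is in $\{0,1\}^n$ (immediate), that $x^{(1)} = x = p$ and $x^{(k+1)} = y = q$ (the latter since after flipping all coordinates in $D$, the resulting string agrees with $x$ outside $D$ and with $y$ on $D$, and $x = y$ outside $D$), and that each step flips exactly one coordinate from $0$ to $1$. For the last point, the key observation is that at step $i$ the coordinate $j_i$ currently has value $0$ in $x^{(i)}$, because $x_{j_i} = 0$ by definition of $D$ and coordinate $j_i$ has not yet been touched by the flips at steps $1, \ldots, i-1$ (which flipped distinct coordinates $j_1, \ldots, j_{i-1}$).

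There is really no obstacle here; the statement is essentially a restatement of the fact that the Boolean lattice is ranked by Hamming weight and any two comparable elements are connected by a saturated chain. The only thing to be careful about is bookkeeping: making sure the indices $j_1, \ldots, j_k$ are distinct (which follows from being elements of the set $D$) so that each flip genuinely turns a $0$ into a $1$ rather than undoing a previous flip.
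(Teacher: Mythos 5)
Your construction is correct: flipping, one at a time, the coordinates in $D=\{i: x_i=0,\ y_i=1\}$ yields a sequence satisfying all the conditions in the paper's definition of a monotone path (including the degenerate case $x=y$, where the one-element sequence suffices). The paper states \Cref{obs:path} without proof, and your argument is exactly the standard one it implicitly has in mind, so there is nothing to compare beyond noting agreement.
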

\begin{observation}
\label[observation]{obs:alt_restr}
The alternation number of a function $f$ is at least the alternation number of any restriction of $f$. That is, for every integer $k\geq 0$ and every subcube $C$, if $\alt(f) \leq k$ then $\alt(f\mid_C) \leq k$.
\end{observation}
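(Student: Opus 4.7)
The plan is to prove the contrapositive form directly, namely $\alt(f\mid_C)\le \alt(f)$ for every subcube $C$. The idea is that any monotone path inside $C$ witnessing many alternations can be embedded as a contiguous sub-path of a monotone path in the full cube from $0^n$ to $1^n$, and a sub-path cannot have more alternations than the full path that contains it.

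More concretely, write the subcube $C$ as specified by a partition $[n]=I_0\sqcup I_1\sqcup F$, where coordinates in $I_0$ are fixed to $0$, coordinates in $I_1$ are fixed to $1$, and coordinates in $F$ are free. Let $a\in\{0,1\}^n$ denote the unique point of $C$ whose free coordinates are all $0$, and let $b\in\{0,1\}^n$ denote the point of $C$ whose free coordinates are all $1$; so $a$ and $b$ are exactly the minimum and maximum elements of $C$ under the coordinate-wise order. The restricted function $f\mid_C$ can be viewed as a function on $\{0,1\}^F$, and its monotone paths from the all-zeros vector to the all-ones vector correspond naturally to monotone paths in $\{0,1\}^n$ from $a$ to $b$ that only flip coordinates in $F$.

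I would now take an arbitrary monotone path $P'=(y^{(1)},\ldots,y^{(k)})$ in $C$ from $a$ to $b$ and build a monotone path $P$ in $\{0,1\}^n$ from $0^n$ to $1^n$ that has $P'$ as a contiguous segment. By \Cref{obs:path} there is a monotone path $P_{\text{pre}}$ in $\{0,1\}^n$ from $0^n$ to $a$ (flipping exactly the coordinates of $I_1$, one at a time), and a monotone path $P_{\text{post}}$ from $b$ to $1^n$ (flipping exactly the coordinates of $I_0$, one at a time). Concatenating $P_{\text{pre}}$, $P'$, and $P_{\text{post}}$ yields a monotone path $P$ from $0^n$ to $1^n$. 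Because every element of $P'$ lies in $C$ and $f\mid_C$ agrees with $f$ on $C$, each of the $\alt_{f\mid_C}(P')$ consecutive pairs on $P'$ that witnesses a change of $f\mid_C$ also witnesses a change of $f$ on $P$. Alternations along $P_{\text{pre}}$ and $P_{\text{post}}$ only add to the count, so
\[
\alt_f(P)\ \ge\ \alt_{f\mid_C}(P').
\]

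Taking $P'$ to be a path that achieves $\alt(f\mid_C)$, we obtain $\alt(f)\ge \alt_f(P)\ge \alt(f\mid_C)$, which is the desired inequality; the hypothesis $\alt(f)\le k$ then immediately yields $\alt(f\mid_C)\le k$. I do not expect any real obstacle: the only thing to be mindful of is being explicit that $P'$ embeds as a sub-sequence of $P$ and that alternations with respect to $f$ on this segment coincide with alternations with respect to $f\mid_C$; both are immediate once the decomposition $[n]=I_0\sqcup I_1\sqcup F$ is written down.
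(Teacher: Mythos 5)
Your proof is correct. The paper itself states \Cref{obs:alt_restr} without proof, treating it as immediate, so there is nothing to compare against directly; your argument is the natural one and is essentially the same concatenation technique the paper does spell out elsewhere (in the proof of \Cref{prop:groundwork-for-stripes} and of \Cref{prop:zebra-useful-facts} part (5)): extend an alternation-maximizing monotone path of $f\mid_C$ (viewed inside $C$, from the minimal point $a$ to the maximal point $b$ of $C$) to a monotone path from $0^n$ to $1^n$ via \Cref{obs:path}, and observe that alternations on the middle segment with respect to $f\mid_C$ are alternations of $f$ on the full path, so $\alt(f)\ge\alt(f\mid_C)$. The decomposition $[n]=I_0\sqcup I_1\sqcup F$ and the identification of monotone paths in $\{0,1\}^F$ with monotone paths in $\{0,1\}^n$ from $a$ to $b$ flipping only free coordinates are exactly the details one would want made explicit, and your write-up handles them cleanly, including the degenerate cases where $P_{\text{pre}}$ or $P_{\text{post}}$ is empty.
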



\section{Graph Properties}
\label[section]{sec:graph}
In this section, we first define graph properties and lay down few notations. We then proceed to prove \Cref{lemma:graph_key}, which immediately implies \Cref{thm:graph_P}.

Let $\G_n$ be the set of all simple undirected graphs on $n$-vertices. A graph property is a function $\cP: \G_n\rightarrow\{0,1\}$ which is invariant under permutation of vertices. In other words, if two graphs are isomorphic, then either both belong to the property, or both do not.

Example of a graph property is one that maps connected graphs to $1$ and disconnected graphs to $0$. But, the Boolean function that maps exactly those graphs that have an edge between vertices $1$ and $2$ to $1$, is not a graph property, as it is not invariant under relabling of vertices.

A non-constant graph property is also called \emph{non-trivial}. A graph $G \in \G_n$ is often identified with a string $x_G \in \{0,1\}^{\binom{n}{2}}$. Each location of $x_G$ corresponds to a distinct unordered pair of distinct vertices of $G$. $\left(x_G\right)_{\{u,v\}}:=1$ if there is an edge in $G$ between vertices $u$ and $v$, and $\left(x_G\right)_{\{u,v\}}:=0$ otherwise. We use $\cP(G)$ and $\cP(x_G)$ interchangeably.
\begin{proof}[Proof of \Cref{lemma:graph_key}]
Wlog. we assume that the empty graph is not in $\cP$. Let $m>0$ be the least number of edges in any graph in $\cP$. Let $G=(V,E)\in\cP$ have $m$ edges. We consider the following cases. \\

\noindent\textsf{Case 1} $m> \frac{n}{4}:$ Restrict $\cP$ by setting $\left(x_G\right)_{\{u,v\}}$ to $0$ for each $\{u,v\} \notin E$. By the minimality of $G$, the restriction is the $\textsf{AND}$ function on the $m$ variables corresponding to the edges in $G$. Thus the block sensitivity and degree of the restriction are both at least $n/4$. Since degree and block sensitivity do not increase under restriction (\Cref{prop:basic_facts} (7)), therefore $\bs(\cP)> n/4$ and $\degree(\cP)> n/4$.  \\

\noindent\textsf{Case 2} $m\leq  \frac{n}{4}:$ In the graph $G$, there are at least $\frac{n}{2}$ vertices that have degree zero. Let the vertices with non-zero degrees be $v_1, v_2, \cdots v_k$, where $k\leq \frac{n}{2}$. Let the remaining vertices be $v_{k+1}, \cdots v_{n}$.
Without loss of generality let $v_k$ be connected to $v_1, \cdots, v_d$. Let $G'$ be the graph obtained from $G$ by removing all edges incident on $v_k$.  Since $G'$ has less than $m$ edges, $G'\notin\cP$.  Now for each $j\in \{k, k+1, \cdots, n\}$, consider the graph $G_j$ obtained by adding the edges $\{j,v_1\}, \{j,v_2), \cdots \{j,v_d\}$ to $G'$. All the $G_j$s formed this way are isomorphic to $G$, and are thus all in $\cP$. Also note that $G_k$ is the same as $G$.
 
 Now consider a function $f:\{0, 1\}^{n-k+1}\rightarrow \{0,1\}$ defined as follows. Let $x=(x_k, x_{k+1},\ldots,x_n)\in\{0, 1\}^{n-k+1}$. We will now define a graph $G_x$ on $n$ vertices, and then define $f(x):=\cP(G_x)$. $G_x$ contains exactly the following edges.
 \begin{itemize}
     \item All edges of $G'$.
     \item Edges $\{v_i, v_1\}, \ldots, \{v_i, v_d\}$ for each $i \in \{k, k+1,\ldots,n\}$ such that $x_i=1$.
 \end{itemize}
 First, note that $f(0^{n-k+1})=0$ (since $G_{0^{n-k+1}}=G' \notin \cP$) and for any string $x$ with $|x|=1$, $f(x)=1$ (since in this case $G_x$ is the same as $G_j$ for some $j \in \{k,\ldots,n\}$, and is hence in $\cP$). Thus, $f$ is a non-constant function.
 
 Next, note that for any $x,x'\in \{0, 1\}^{n-k+1}$ such that $|x|=|x'|$, $G_x$ and $G_{x'}$ are isomorphic, and hence $f(x)=\cP(G_x)=\cP(G_{x'})=f(x')$. Thus, $f$ is a symmetric function. 
 
 Since any non-constant symmetric function has linear degree and block sensitivity (\Cref{prop:basic_facts} (8)), therefore $\degree(f)$ and $\bs(f)$ are both $\Omega(n-k+1)=\Omega(n)$ (since $k\leq n/2$).
 
 Finally, $f$ is obtained from $\cP$ by restriction and identification of variables, as follows. First, every variable corresponding to a pair $\{v_i, v_j\}$ such that $1 \leq i <j \leq k-1$ and $\{i,j\}$ is not an edge of $G'$, is set to $0$. Next, for each $i\in \{k, k+1, \ldots, n\}$, the variables corresponding to pairs $\{v_i, v_1\},\ldots,\{v_i, v_d\}$ are identified.
 
 Since degree and block sensitivity do not increase under restriction and identification of variables (\Cref{prop:basic_facts} parts (7) and (8)), therefore, $\degree(\cP)$ and $\bs(\cP)$ are both $\Omega(n)$.
\end{proof}
\section{Zebra functions}
\label[section]{sec:zebra}
In this section, we first formally define zebra functions and some related concepts, and state some basic facts about zebra functions. We then proceed to prove \Cref{thm:zebra-d-bs-deg}. 
\subsection{Definitions and basic concepts}
\begin{definition}[Zebra function]
A function $f:\{0,1\}^n \rightarrow \{0, 1\}$ is called a \emph{zebra function} if all monotone paths from $0^n$ to $1^n$ have same alternation number.
\end{definition}
Visually, a zebra function induces a partition of the Boolean hypercube into \emph{stripes}. In each stripe, the value of the function is constant (i.e., each stripe is monochromatic with respect to the function). Furthermore, the value of the function is different in adjacent stripes. Each monotone path from $0^n$ to $1^n$ must pass through each stripe (See \Cref{fig:zebra}). We now build up towards defining the stripes formally.
\begin{proposition}
\label[proposition]{prop:groundwork-for-stripes}
Let $f$ be a zebra function on $n$ bits and $x, y \in \{0, 1\}^n$ be such that $x \leq y$. Let $P_1$ and $P_2$ be two monotone paths from $x$ to $y$. Then, $\alt_f(P_1)=\alt_f(P_2)$.
\end{proposition}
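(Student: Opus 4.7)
The plan is to reduce the claim about paths between arbitrary $x \leq y$ to the definition's guarantee about paths between $0^n$ and $1^n$, via path concatenation. By \Cref{obs:path}, pick any fixed monotone path $Q = (0^n = q_1, q_2, \ldots, q_a = x)$ from $0^n$ to $x$, and any fixed monotone path $R = (y = r_1, r_2, \ldots, r_c = 1^n)$ from $y$ to $1^n$; these exist. For $i \in \{1,2\}$, concatenate to form
\[
  \widetilde{P_i} := (q_1, q_2, \ldots, q_{a-1}, x, p^{(i)}_2, \ldots, p^{(i)}_{b_i-1}, y, r_2, \ldots, r_c),
\]
where $P_i = (x = p^{(i)}_1, p^{(i)}_2, \ldots, p^{(i)}_{b_i} = y)$. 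A direct check shows that $\widetilde{P_i}$ is a monotone path from $0^n$ to $1^n$: each single coordinate flips from $0$ to $1$ across consecutive elements within $Q$, $P_i$, and $R$ individually, and the junctions are seamless because $Q$ ends at $x = p^{(i)}_1$ and $P_i$ ends at $y = r_1$.

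Next I would observe that $\alt_f$ is additive over such concatenations. Since $\alt_f$ counts sign changes of $f$ across consecutive pairs of path elements, and the pairs in $\widetilde{P_i}$ are exactly the union (disjoint, except that $x$ and $y$ are shared endpoints of adjacent segments) of the pairs from $Q$, $P_i$, and $R$, we get
\[
  \alt_f(\widetilde{P_i}) \;=\; \alt_f(Q) + \alt_f(P_i) + \alt_f(R).
\]
Because $f$ is a zebra function, $\alt_f(\widetilde{P_1}) = \alt_f(\widetilde{P_2})$. Subtracting the common terms $\alt_f(Q) + \alt_f(R)$ from both sides yields $\alt_f(P_1) = \alt_f(P_2)$.

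The argument is essentially a concatenation trick and has no real obstacle; the only point that needs a small verification is the additivity of $\alt_f$ under concatenation, which is immediate from the definition once one checks that no alternation is either double-counted or lost at the gluing points $x$ and $y$. Edge cases ($x = 0^n$ or $y = 1^n$) are handled by taking $Q$ or $R$ to be the trivial one-element sequence, which contributes zero to the alternation count.
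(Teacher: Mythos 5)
Your proposal is correct and is essentially the same argument as the paper's: the paper also concatenates a path from $0^n$ to $x$, the given path, and a path from $y$ to $1^n$, uses additivity of $\alt_f$ over concatenation, and invokes the zebra property (phrased there as a proof by contradiction rather than your direct subtraction, which is an inessential difference).
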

\begin{proof}
Towards a contradiction, assume that $\alt_f(P_1)\neq\alt_f(P_2)$. Consider any monotone paths $Q_1$ and $Q_2$ from $0^n$ to $x$ and from $y$ to $1^n$ respectively. Let $R_1$ (resp. $R_2$) be the the monotone path from $0^n$ to $1^n$ obtained by concatenating $Q_1, P_1$ and $Q_2$ (resp. $Q_1, P_2$ and $Q_2$). Then, $\alt_f(R_1)=\alt_f(Q_1)+\alt_f(P_1)+\alt_f(Q_2)$ and $\alt_f(R_2)=\alt_f(Q_1)+\alt_f(P_2)+\alt_f(Q_2)$. Thus, $\alt_f(R_1)\neq \alt_f(R_2)$, contradicting the hypothesis that $f$ is a zebra function.
\end{proof}
In light of \Cref{prop:groundwork-for-stripes} we extend the definition of alternation number to points with respect to zebra functions.
\begin{definition}[Alternation number of a point with respect to a zebra function]
\label[Definition]{alt-of-points}
Let $f$ be a zebra function on $n$ bits and $x \in \{0, 1\}^n$. The alternation number of $x$ with respect to $f$, denoted by $\alt_f(x)$, is defined as $\alt_f(P)$ for any monotone path from $0^n$ to $x$. \Cref{prop:groundwork-for-stripes} guarantees that $\alt_f(x)$ is well-defined.
\end{definition}
Now we are ready to define stripes of a zebra function. Let $f$ be a zebra function on $n$ bits and $\alt(f)=k$. For $i=0, \ldots, k$, the $i$-th stripe of $f$, $\stripe_f(i)$, is defined to be the set $\{x \in \{0, 1\}^n \mid \alt_f(x)=i\}$.  An easy induction on $i$ establishes that $\stripe_f(i)$ is monochromatic with respect to $f$. For $i=1,\ldots,k$, we denote by $f(i)$ the value of $f$ on any point in $\stripe_f(i)$.

$x\in\stripe_f(i)$ is called a \emph{minimal point} of $\stripe_f(i)$ if for each point $y<x$, $y \notin \stripe_f(i)$. Similarly, $x\in\stripe_f(i)$ is called a \emph{maximal point} of $\stripe_f(i)$ if for each point $y>x$, $y \notin \stripe_f(i)$.

The following proposition lists some useful facts about zebra functions; a proof may be found in \Cref{sec:zebra-facts}.
\begin{proposition}
\label[proposition]{prop:zebra-useful-facts}
Let $f$ be a zebra function on $n$ bits and $\alt(f)=k$. Then the following hold.
\begin{enumerate}
    \item For $0\leq i<j\leq k$, $\stripe_f(i) \cap \stripe_f(j)=\emptyset$. Also, $\cup_{i=1}^k \stripe_f(i)=\{0, 1\}^n$. In other words, the stripes form a partition of the Boolean hypercube.
    \item For $i=0,\ldots,k-1$, $f(i)\neq f(i+1)$.
    \item Let $i\in \{0,\ldots,k\}$ and $x \in \stripe_f(i)$. Then there exists a $y \leq x$ (resp. $y \geq x$) such that $y$ is a minimal (resp. maximal) point of $\stripe_f(i)$.
    \item Let $x=(x_1,\ldots,x_n)$ be a minimal (resp. maximal) point of $\stripe_f(i)$. Let $x_i=1$ (resp. $x_i=0$). Let $y$ be a point obtained from $x$ by changing the value of $x_i$ and leaving other bits of $x$ unchanged. Then $x\in\stripe_f(i-1)$ (resp. $x \in \stripe_f(i+1)$). In particular, $f(y) \neq f(x)$.
    \item For any $j\in[n]$ and $b\in\{0,1\}$, the function $f\mid_{x_j=b}$ is a zebra function. In other words, the class of all zebra functions is closed under restrictions of variables.
\end{enumerate}
\end{proposition}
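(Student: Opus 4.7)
The workhorse observation is that $\alt_f$ is nondecreasing along any monotone path: on a single step $x^{(\ell)} \to x^{(\ell+1)}$ the alternation count either stays the same or rises by $1$, and it rises by $1$ iff $f(x^{(\ell)}) \neq f(x^{(\ell+1)})$. This follows immediately from \Cref{prop:groundwork-for-stripes} and \Cref{alt-of-points} by comparing a monotone path from $0^n$ to $x^{(\ell)}$ with its one-step extension to $x^{(\ell+1)}$. For (1), well-definedness of $\alt_f$ places each $x \in \{0,1\}^n$ in exactly the stripe $\stripe_f(\alt_f(x))$; extending any monotone path from $0^n$ to $x$ to one that reaches $1^n$ forces $\alt_f(x) \leq k$, so the stripes range over $\{0,\ldots,k\}$ and partition the cube.

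For (3), if $x \in \stripe_f(i)$ is not minimal, pick any $y < x$ in $\stripe_f(i)$; on any monotone path from $y$ to $x$ the alternation count equals $i$ at both ends, and monotonicity forces it to stay at $i$ throughout, so the entire path lies in $\stripe_f(i)$. The penultimate point of such a path gives $z < x$ in $\stripe_f(i)$ with $|z| = |x|-1$; iterating descends to a minimal point below $x$. The maximal case is symmetric.

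For (2) and the minimal-point case of (4), fix $i \geq 0$ and let $x$ be a minimal point of $\stripe_f(i+1)$; such a point exists because any monotone path from $0^n$ to $1^n$ with alternation $\alt(f) = k$ witnesses every value of $\alt_f$ between $0$ and $k$, and then (3) produces a minimal point inside each nonempty stripe. Since $0^n \in \stripe_f(0)$ is disjoint from $\stripe_f(i+1)$, $x$ has some $1$-bit at a coordinate $j$; set $y = x^{\oplus j}$. Extending any monotone path from $0^n$ to $y$ by the step $y \to x$ shows $\alt_f(y) \in \{i, i+1\}$, and minimality of $x$ rules out $\alt_f(y) = i+1$. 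Hence $y \in \stripe_f(i)$, the single-step rule forces $f(y) \neq f(x)$, and this gives the minimal half of (4); the identity $f(i) = f(y) \neq f(x) = f(i+1)$ is precisely (2). The maximal-point half of (4) is symmetric, with the step $x \to y$ landing in $\stripe_f(i+1)$.

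For (5), fix $j \in [n]$, $b \in \{0,1\}$ and set $g = f\mid_{x_j = b}$. Two monotone paths $P_1, P_2$ from $0^{n-1}$ to $1^{n-1}$ in the restricted cube embed as monotone paths $\tilde P_1, \tilde P_2$ in $\{0,1\}^n$ from $(0^{n-1},b)$ to $(1^{n-1},b)$. Fix once and for all a monotone path $A$ from $0^n$ to $(0^{n-1},b)$ and a monotone path $B$ from $(1^{n-1},b)$ to $1^n$; then $A \cdot \tilde P_r \cdot B$ is a monotone path from $0^n$ to $1^n$ whose $f$-alternation equals $\alt_f(A) + \alt_g(P_r) + \alt_f(B)$ because alternation is additive across concatenations sharing a common endpoint. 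The zebra hypothesis on $f$ equates these totals for $r = 1, 2$, so $\alt_g(P_1) = \alt_g(P_2)$ and $g$ is zebra. The only delicate point across the whole argument is the observation used in (3) and (4) that a monotone path between two equi-alternation points must stay inside the common stripe; this is what converts the global minimality statement into a single-bit-flip descent and what enables the single-step value-change comparisons that drive (2) and (4).
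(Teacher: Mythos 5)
Your proof is correct, and for parts (1), (3), (4) and (5) it follows essentially the same route as the paper: everything is driven by the well-definedness of $\alt_f$ (\Cref{prop:groundwork-for-stripes}), additivity of alternation under concatenation, and the single-step rule that $\alt_f$ increases by exactly $[f$ changes$]$ along an edge; your descent to a minimal point in (3) is just a slightly more procedural version of the paper's ``take a minimum-Hamming-weight element of $\{z \le x : z \in \stripe_f(i)\}$'', and your (5) re-derives, via concatenation with fixed paths $A$ and $B$, exactly the invariance that the paper gets by citing \Cref{prop:groundwork-for-stripes} on the embedded paths. The one genuine divergence is part (2): the paper proves it directly by the parity identity $f(x) = \alt_f(x) \bmod 2$ (induction on $|x|$, assuming $f(0^n)=0$), whereas you derive it from the bit-flip argument of part (4) applied to a minimal point of $\stripe_f(i+1)$, together with monochromaticity of the stripes (legitimately presupposed by the notation $f(i)$) and a nonemptiness argument for the stripes that the parity route does not need but which you supply correctly via the path from $0^n$ to $1^n$. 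Both arguments are sound; the parity identity is a bit more self-contained, while your route has the small advantage of unifying (2) and (4) under one mechanism.
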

\subsection{Proof of \Cref{thm:zebra-d-bs-deg}}
We need the following lemma.
\begin{lemma}
\label[lemma]{lemma:C-of-f_i}
Let $f: \{0, 1\}^n \rightarrow \{0, 1\}$ be a zebra function and $\alt(f)=k$. For $i=0,\ldots,k$, define
\[f_i(x)=\left\{\begin{array}{cc}1 & \mbox{if $x\in \cup_{j \geq i} \stripe_f(j)$,}\\0 & \mbox{otherwise.}\end{array}\right.\]
Then, $\cer(f_i)=O(\degree(f)^2)$.
\end{lemma}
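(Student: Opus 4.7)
The plan is to exploit the monotonicity of $f_i$ together with the boundary structure of zebra stripes captured by \Cref{prop:zebra-useful-facts}(4). First I would verify that $f_i$ is a monotone Boolean function. If $x \leq y$, concatenating a monotone path from $0^n$ to $x$ (which exists by \Cref{obs:path}) with one from $x$ to $y$ gives, via \Cref{prop:groundwork-for-stripes}, the inequality $\alt_f(x) \leq \alt_f(y)$; hence $f_i(x) \leq f_i(y)$.

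For a monotone Boolean function $g$ on $n$ bits, certificate complexity $\cer(g)$ equals $\max\bigl(M_1(g), M_0(g)\bigr)$, where $M_1(g)$ is the largest Hamming weight of a minimal $1$-input of $g$ and $M_0(g)$ is the largest number of $0$-bits in a maximal $0$-input of $g$. My next step is to identify these extremal inputs of $f_i$ within the stripe structure of $f$. Since flipping a single bit changes the alternation number by at most $1$, a minimal $1$-input of $f_i$ is precisely a minimal point of $\stripe_f(i)$: if $y$ is such a minimum, then $\alt_f(y) \geq i$ but flipping any $1$-bit drops the alternation number strictly below $i$, forcing $\alt_f(y) = i$. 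Symmetrically, every maximal $0$-input of $f_i$ is a maximal point of $\stripe_f(i-1)$.

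The heart of the argument is then to bound $|y|$ for a minimal point $y$ of $\stripe_f(i)$ (and $n-|y|$ for a maximal point of $\stripe_f(i-1)$) by $O(\degree(f)^2)$. Here I would invoke \Cref{prop:zebra-useful-facts}(4): at a minimal point $y$ of $\stripe_f(i)$, flipping any $1$-bit of $y$ moves the input into $\stripe_f(i-1)$, where $f$ takes the opposite value by part (2) of the same proposition. Hence every $1$-bit of $y$ is sensitive for $f$ at $y$, giving $|y| \leq \s(f, y) \leq \s(f)$. A symmetric argument for a maximal point of $\stripe_f(i-1)$ yields $n - |y| \leq \s(f)$. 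Applying the Nisan--Szegedy-style inequality $\s(f) \leq \bs(f) = O(\degree(f)^2)$ from \Cref{prop:basic_facts} then completes the bound $\cer(f_i) = O(\degree(f)^2)$.

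I do not anticipate a real obstacle here: once one notices that bits on the boundary of a stripe are $f$-sensitive, the argument reduces to standard facts about monotone functions together with Nisan--Szegedy applied to $f$ itself. The mildly subtle point is that we never need to control $\degree(f_i)$ or $\bs(f_i)$ directly (a priori either could be much larger than $\degree(f)$); instead, we translate the $f_i$-certificate requirement into a sensitivity statement about $f$.
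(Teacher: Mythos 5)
Your proposal is correct and takes essentially the same route as the paper's proof: both exploit the monotonicity of $f_i$, reduce to minimal points of $\stripe_f(i)$ (resp.\ maximal points of $\stripe_f(i-1)$), use \Cref{prop:zebra-useful-facts} (4) to see that the stripe-boundary bits are sensitive for $f$, and conclude via $\s(f)\leq\bs(f)=O(\degree(f)^2)$. The only cosmetic difference is that you package the argument through the minterm/maxterm characterization of certificate complexity for monotone functions, whereas the paper certifies an arbitrary input directly and treats the $0$-certificate side as analogous.
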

\begin{proof}
The Lemma is trivial for $i=0$. So, we assume that $1 \leq i \leq k$. We show that $\cer_1(f_i)=O(\degree(f)^2)$. The argument that $\cer_0(f_i)=O(\degree(f)^2)$ is analogous. First, note that $f_i$ is a zebra (monotone) function with stripes $\stripe_{f_i}(0)=\cup_{j<i}\stripe_f(j)$ and $\stripe_{f_i}(1)=\cup_{j\geq i}\stripe_f(j)$. Next, fix an input $x$ such that $f_i(x)=1$. Thus $x\in \stripe_f(j)$ for some $j\geq i$. Let $y\leq x$ be a minimal input of $\stripe_f(i)$ (\Cref{prop:zebra-useful-facts} (3)). $y$ is also a minimal input of $\stripe_{f_i}(1)$. Thus $f_i(y)=1$. Since $f_i$ is monotone the $1$-bits of $y$ form a certificate of $x$ with respect to $f_i$. By \Cref{prop:zebra-useful-facts} (4), all the $1$-bits of $y$ are sensitive with respect to $f$. This proves that $\cer_1(f_i, x)\leq s(f) \leq \degree(f)^2$ (\Cref{prop:basic_facts} (3)). Since $x$ is an arbitrary $1$-input, the claim follows.
\end{proof}
\begin{proof}[Proof of \Cref{thm:zebra-d-bs-deg}]
Let $f: \{0, 1\}^n \rightarrow \{0, 1\}$ be a non-constant zebra function. \\
(Part 1) By \Cref{lemma:d_n_cstar} it is sufficient to prove that $\cminstar(f) \leq bs(f)$. Let $\alt(f)=k\geq 1$, and let $x=(x_1,\ldots,x_n) \in \{0, 1\}^n$ be a minimal point in $\stripe_f(k)$ (\Cref{prop:zebra-useful-facts}(3)). Let $t$ be the number of $1$-bits in $x$. By \Cref{prop:zebra-useful-facts}(4), every $1$-bit in $x$ is sensitive. Thus, $\bs(f) \geq \s(f)\geq t$. On the other hand, every input $(y_1,\ldots,y_n)$ such that $y_i=1$ whenever $x_i=1$, lies in $\stripe_f(k)$, and hence $f(y)=f(x)$; thus the $1$-bits of $x$ form a certifcate of $x$. Hence $\cmin(f)\leq \cer(f, x)\leq t\leq \bs(f)$. Now since the class of zebra functions is closed under restriction of variables (\Cref{prop:zebra-useful-facts}(5)), and block sensitivity does not grow under restriction of variables, therefore for any restriction $f'$ of $f$ to a subcube, $\cmin(f')\leq\bs(f')\leq\bs(f)$. Since, this holds for every restriction $f'$ of $f$, we conclude that $\cminstar(f)\leq\bs(f)$, and the proof is complete.

(Part 2) Let $x \in \stripe_f(i)$. Let the functions $f_i$ be as defined in \Cref{lemma:C-of-f_i}. By \Cref{lemma:C-of-f_i}, $x$ has a $1$-certificate $c_1$ of size $O(\degree(f))^2$ with respect to $f_i$. If $i=k$, that also certifies that $x \in \stripe_f(k)$. Since, $\stripe_f(k)$ is monochromatic with respect to $f$, we are through. If $i<k$, by \Cref{lemma:C-of-f_i} $x$ has a $0$-certificate $c_0$ of size $O(\degree(f))^2$ with respect to $f_{i+1}$. The concatenation of $c_1$ and $c_0$ certifies that $x \in \stripe_f(i)$. Since, $\stripe_f(i)$ is monochromatic with respect to $f$, the theorem follows.
\end{proof}
\section{Functions with bounded alternation number}
\label[section]{sec:k-alt}
In this Section, we prove \Cref{theorem:k-alternation-d-bs}.
\begin{proof}[Proof of \Cref{theorem:k-alternation-d-bs}]
Let $\alt(f)\leq k$. We will show that $\cmin(f) \leq k\cdot\bs(f)$ and $\cmin(f) \leq k\cdot\degree(f)$. Since, alternation number and block sensitivity does not increase unser restrictions (Observation~\ref{obs:alt_restr} and \Cref{prop:basic_facts} (7)), therefore we will have that $\cminstar(f) \leq k\cdot\bs(f)$ and $\cminstar(f) \leq k\cdot\degree(f)$. By \Cref{lemma:d_n_cstar} the theorem will follow.

(Part (i): $\cmin(f) \leq k\cdot\bs(f)$) Consider the subcube returned by \Cref{algo: iteration generation}.\\
\begin{algorithm}[H]
\label[algorithm]{algo: iteration generation}
\SetAlgoLined
\textbf{Initialize } $S^{(0)}\gets \emptyset, C^{(0)}\gets\{0, 1\}^n, t \gets 1$, $x^{(0)}\gets 0^n$. \\
\While{$f\mid_{C^{(t-1)}}$ is not a constant function}{
    Let $x^{(t)}>x^{(t-1)}$ be a point with minimum Hamming weight such that $f(x^{(t)})\neq f(x^{(t-1)})$.\\
    $S^{(t)}\gets \{i\in [n] \mid x^{(t)}_i=1\}$.\\
    $C^{(t)}\gets \{z \in \{0,1\}^n \mid z_i=1\mbox{ for all }i\in S^{(t)}\}$. \\
    $t\gets t+1$. \\
}
Return $C^{(t-1)}$.
\caption{}
\end{algorithm}
We first show that the step 3 of \Cref{algo: iteration generation} is well-defined: there exists an $x^{(t)}$ for the algorithm to choose. Since $f$ is non-constant on $C^{(t-1)}$ and $x^{(t-1)}\in C^{(t-1)}$, therefore there exists another point $x^{(t)} \in C^{(t-1)}$ such that $f(x^{(t)})\neq f(x^{(t-1)})$. Next, note that $x^{(t-1)}_i=1$ for all $i\in S^{(t-1)}$ and $x^{(t-1)}_i=0$ otherwise. Thus, for any other point $x^{(t)}\in C^{(t-1)}$, $x^{(t)}>x^{(t-1)}$.
Thus, an $x^{(t)}$ is guaranteed to exist for the algorithm to pick.

Let $\ell$ be the number of iterations of \Cref{algo: iteration generation}. Thus, $C^{(\ell)}$ is the subcube returned by \Cref{algo: iteration generation}. Clearly, $f\mid_{C^{(\ell)}}$ is a constant function; thus \Cref{algo: iteration generation} indeed returns a certificate with co-dimension $\codim(C^{(\ell)})=|S^{(\ell)}|$. Now by the choice of $x^{(t)}$, any index $j$ such that $x^{(t)}_j=1$ and $x^{(t-1)}_j=0$ is sensitive for $x^{(t)}$. Thus the number of such locations is at most $s(f)$. Thus for each $t=1,\ldots,\ell$, $|S^{(t)}|\leq|S^{(t-1)}|+\s(f)$, giving us $\codim(C^{(\ell)})=|S^{(\ell)}|\leq \ell\cdot\s(f) \leq \ell\cdot\bs(f)$ (\Cref{prop:basic_facts} (1)).

Now, it follows from Observation~\ref{obs:path} that there exists a monotone path from $0^n$ to $x^{(\ell)}$ that passes through each $x^{(t)}$ for $t=1,\ldots,\ell-1$. The alternation number of the path is at least $\ell$, giving us $\ell\leq k$. Hence $\codim(C^{(\ell)})\leq  k\cdot\bs(f)$.

(Part (ii): $\cmin(f) \leq k\cdot\degree(f)$) Consider the subcube returned by~\Cref{algo: iteration generation_poly}.\\
\begin{algorithm}[H]
\label[algorithm]{algo: iteration generation_poly}
\SetAlgoLined
\textbf{Initialize } $S^{(0)}\gets \emptyset, C^{(0)}\gets\{0, 1\}^n, t \gets 1$, $x^{(0)}\gets 0^n$. \\
\While{$f\mid_{C^{(t-1)}}$ is not a constant function}{
    Let $M$ be a maximal monomial of $P_{f\mid_{C^{(t-1)}}}$.\\
    Let $x^{(t)}>x^{(t-1)}$ be a point such that $x^{(t)}_i=0$ for all $i\notin M \cup S^{(t-1)}$ and $f(x^{(t)})\neq f(x^{(t-1)})$.\\
    $S^{(t)}\gets \{i\in [n] \mid x^{(t)}_i=1\}$.\\
    $C^{(t)}\gets \{z \in \{0,1\}^n \mid z_i=1\mbox{ for all }i\in S^{(t)}\}$.\\
    $t\gets t+1$.\\
    }
Return $C^{(t-1)}$.
\caption{}
\end{algorithm}
First, we show that step 4 of \Cref{algo: iteration generation_poly} is well defined: there indeed exists an $x^{(t)}$ for the algorithm to choose. $P_{f\mid_{C^{(t-1)}}}$ is a polynomial on variables $\{x_i \mid i\notin S^{(t-1)}\}$ obtained by substituting $1$ for  the variables $x_i$ for all $i\in S^{(t-1)}$. Now, substitute $0$ for all variables outside of $M$ in $P_{f\mid_{C^{(t-1)}}}$. Since $M$ is a maximal monomial of $P_{f\mid_{C^{(t-1)}}}$, we are left with a non-zero polynomial $P'$, say, on the remaining set of variables $\{x_i \mid i \in M \setminus S^{(t-1)}\}$. Since $P'$ is non-constant, therefore there exists a non-zero input $z=(z_{i})_{i\in M\setminus S^{(t-1)}}$ such that \begin{align}
\label[equation]{eqn:eq2}
    P'(z) \neq P'(0^{M\setminus S^{(t-1)}})=P_{f\mid_{C^{(t-1)}}}(0^{[n]\setminus S^{(t-1)}})=P_f(x^{(t-1)})=f(x^{(t-1)}).
\end{align}
    Now, define $x^{(t)}$ as follows:
\[x^{(t)}_i=\left\{\begin{array}{ccc}z_i & \mbox{if $i \in M\setminus S^{(t-1)}$,} \\ 1 & \mbox{if $i \in S^{(t-1)}$,}\\$0$ & \mbox{otherwise.}\end{array}\right.\]
Thus $P'(z)=P_f(x^{(t)})=f(x^{(t)})$. Furthermore, since $x^{(t-1)}_i=1$ if $i\in S^{(t-1)}$ and $x^{(t-1)}_i=0$ otherwise, we have that $x^{(t)}>x^{(t-1)}$. \Cref{eqn:eq2} now lets us conclude that $x^{(t)}$ satisfies the criteria to be chosen by the algorithm in step $4$.

Now we analyze the subcube output by \Cref{algo: iteration generation_poly}. Let $\ell$ be the number of iterations of \Cref{algo: iteration generation_poly}. Hence $C^{(\ell)}$ is the subcube returned by \Cref{algo: iteration generation_poly}. Clearly, $f\mid_{C^{(\ell)}}$ is a constant function; thus \Cref{algo: iteration generation_poly} indeed returns a certificate with co-dimension $\codim(C^{(\ell)})=|S^{(\ell)}|$. Now by the choice of $x^{(t)}$, $x^{(t)}$ has at most $|M| \leq \degree(f)$ $1$s more than $x^{(t-1)}$. Thus for each $t=1,\ldots,\ell$, $|S^{(t)}|\leq |S^{(t-1)}|+\degree(f)$, giving us $\codim(C^{(\ell)})=|S^{(\ell)}|\leq \ell\cdot\degree(f)$. Similar to Part (i),  it follows that $\ell\leq k$, concluding the proof.
\end{proof}

\section{Connections to communication complexity and other measures}
\label[section]{sec:communication}
In this section we prove Theorem~\ref{thm:cc-deg-bs-implications} and \Cref{thm:corollary_of_main_cc}. First we state few results needed to prove the theorem.
\begin{proposition}\label[proposition]{proposition:lifting}\
\begin{enumerate}
    \item \label[proposition]{proposition:lifting1} For every Boolean function $f: \{0,1\}^n \rightarrow \{0,1\}$ and for any arbitrary bi-partition of the indices into two disjoint sets $S_1, S_2 \subseteq [n]$ such that $S_1 \cap S_2 = \emptyset$ and $S_1 \cup S_2 = [n]$,  
    \[\CC(f) \leq 2 \cdot \subcubedt(f)\leq 2 \cdot \dclass(f).\]
    
     \item \label[proposition]{proposition:lifting2}For every Boolean function $f$, $\rank(f) \leq \subcubedt(f) \leq \rank(f) \cdot (\log n+1)$.
     
\end{enumerate}
\end{proposition}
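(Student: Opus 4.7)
The plan is to establish each inequality of \Cref{proposition:lifting} separately. For $\CC(f) \leq 2 \cdot \subcubedt(f)$ in \Cref{proposition:lifting1}, I would take an optimal subcube decision tree $T$ of depth $\subcubedt(f)$ computing $f$ and have Alice and Bob jointly simulate $T$ node by node. Each internal node of $T$ queries the indicator of some subcube $C$ fixed by constraints $x_i = b_i$; since the index set $[n]$ splits along the bipartition $(S_1, S_2)$, the predicate ``$x \in C$'' factors as the conjunction of an Alice-local predicate (on $S_1$-coordinates) and a Bob-local predicate (on $S_2$-coordinates). Alice evaluates her predicate on her input and sends the resulting bit to Bob; Bob does the same. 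With those two bits exchanged, both parties know the answer to the subcube query and descend into the correct child, so each internal node costs exactly two bits. The second inequality $\subcubedt(f) \leq \dclass(f)$ is immediate because an ordinary single-variable query is itself a subcube-indicator query.

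For the lower bound $\rank(f) \leq \subcubedt(f)$ of \Cref{proposition:lifting2}, I would induct on the depth $d$ of an optimal subcube decision tree $T$ for $f$. At the root $T$ queries a subcube $C$ fixed by $x_{i_1} = b_1, \ldots, x_{i_k} = b_k$; by induction its yes- and no-subtrees admit ordinary decision-tree realizations $T_{\mathrm{yes}}, T_{\mathrm{no}}$ of rank $\leq d-1$. I replace the root with a ``spine'' $P_1, \ldots, P_k$ of ordinary queries on $x_{i_1}, \ldots, x_{i_k}$ in order: at each $P_j$, a disagreement with $b_j$ branches off to a copy of $T_{\mathrm{no}}$, while matching all $k$ constraints falls through to $T_{\mathrm{yes}}$. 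Walking up the spine from $P_k$ to $P_1$ and applying the rank recurrence $\rank(u) = \max(r_1, r_2)$ when $r_1 \neq r_2$ and $\rank(u) = r_1 + 1$ when $r_1 = r_2$, one checks that each $P_j$'s children are a sibling of rank $\leq d-1$ and a next-spine-node of rank $\leq d$, so a short case analysis yields $\rank(P_j) \leq d$; hence the root has rank $\leq d$.

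For the upper bound $\subcubedt(f) \leq \rank(f)(\log n + 1)$, I would induct on $r = \rank(f)$. Fix an optimal ordinary decision tree $T$ of rank $r$ for $f$. By the rank recurrence, no internal node of $T$ has two children both of rank $r$, so I define the \emph{heavy path} to start at the root and descend into the (unique) rank-$r$ child at each step, stopping once both children have rank $\leq r-1$. Every subtree hanging off the heavy path, together with the two subtrees at its endpoint, has rank $\leq r-1$. The heavy path is characterized by a fixed sequence of $(\text{variable}, \text{value})$ constraints, producing a nested chain of subcubes $C_0 \supseteq C_1 \supseteq \cdots \supseteq C_L$ with $L \leq n-1$ (distinct variables along a DT path). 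The subcube decision tree then performs a binary search using at most $\log n + 1$ subcube queries of the form ``$x \in C_t$?'' to identify exactly which of the at most $L+2$ off-path subtrees $x$ falls into, and then recurses into that subtree, which has rank $\leq r-1$ and admits a subcube decision tree of depth $\leq (r-1)(\log n + 1)$ by induction. Summing gives total depth $\leq r(\log n + 1)$.

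The main obstacle I anticipate is the rank-preservation analysis along the spine in the proof of $\rank(f) \leq \subcubedt(f)$: the spine can be as long as $k = n$, and the ``$+1$'' case of the rank recurrence could in principle fire at every level, threatening an accumulation well above $d$. The argument goes through only because each sibling $T_{\mathrm{no}}$ has rank strictly less than $d$, so any ``$+1$'' bump applies to a value already $\leq d-1$ and still yields rank $\leq d$.
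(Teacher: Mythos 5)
Your proposal is correct and follows essentially the same route as the paper: two bits per subcube query for the communication simulation, an induction on subcube-tree depth with a query ``spine'' (and the same rank bookkeeping) for $\rank(f)\leq\subcubedt(f)$, and a binary search over the nested chain of subcubes along the maximal rank-$r$ path (your heavy path coincides with the paper's path to the deepest rank-$r$ node) for $\subcubedt(f)\leq\rank(f)(\log n+1)$. Your explicit case analysis of the rank recurrence along the spine is a welcome elaboration of a step the paper dismisses as easy to check.
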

Proof of this proposition is deferred to Appendix~\ref{section-proof-of-lifting}.

Let $\ind_{m+2^m}$ denote the indexing function on $m+2^m$ bits (see \Cref{def:ind}). G{\"o}{\"o}s, Pitassi and Watson proved the following lifting theorem by using a simulation method by Raz and Mckenzie, which essentially asserts that the communication protocol for $f \circ \ind_{c \log n + n^c}$ that simulates an optimal decision tree of $f$, serving each query by solving the corresponding copy of the indexing function, is asymptotically optimal.
\begin{theorem}{\cite{GPW18,raz1997separation}}\label[theorem]{proposition:lifting-gpw}
There exists a constant $c>0$ such that for every Boolean function $f:\{0,1\}^n\rightarrow \{0,1\}$,
\[\CC(f \circ \ind_{c\log n + n^c}) = \Theta(\log n \cdot D(f)),\]
where the communication complexity is with respect to the bi-partition of the input variables where the address bits of the input string of each copy of indexing are with Alice, and the corresponding target bits are with Bob.
\end{theorem}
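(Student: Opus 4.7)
The plan is to prove the upper and lower bounds of the $\Theta$ separately. The upper bound $\CC(f \circ \ind_{c\log n + n^c}) = O(\log n \cdot \dclass(f))$ is the easy direction and follows from direct simulation. I would fix an optimal decision tree $T$ for $f$ of depth $\dclass(f)$, and have Alice and Bob jointly simulate it. Whenever $T$ queries the $i$-th input bit of $f$, Alice (who holds the $c\log n$ address bits of the $i$-th copy of the gadget) sends them to Bob using $c\log n$ bits, and Bob (who holds the $n^c$ target bits) replies with the indexed value using a single bit. Each simulated query thus costs $c\log n + 1 = O(\log n)$ bits, yielding a total of $O(\log n \cdot \dclass(f))$ bits of communication.

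The lower bound is the deep direction, and my plan is to follow the Raz--McKenzie simulation framework. Given any deterministic protocol $\Pi$ for $f \circ \ind_{c\log n + n^c}$ of cost $C$, the goal is to construct a decision tree for $f$ of depth $O(C/\log n)$. The tree walks down $\Pi$ from the root, maintaining at each step a combinatorial rectangle $R = X \times Y$ consistent with the transcript seen so far. The key invariant to be preserved is a \emph{thickness} (or density) condition: there is a large set $I \subseteq [n]$ of ``free'' coordinates such that the projection of $R$ onto the address bits of the $i$-th copy of the gadget remains close to uniform for every $i \in I$, so that the value of the $i$-th indexing gadget is not yet forced by $R$.

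The central technical step is to show that each message in $\Pi$ damages thickness only mildly, and that whenever a coordinate $i$ becomes ``spoiled,'' the decision tree can query $x_i$ to either restore thickness or remove $i$ from the free set. A careful amortization, exploiting that each bit of communication splits $R$ by a factor of at most $2$ while the address alphabet has size $2^{c\log n} = n^c$, will show that $\Omega(\log n)$ bits of communication can be paid for by $O(1)$ decision-tree queries. At every leaf of $\Pi$, the transcript together with the queries made must determine $f(x)$: the remaining free coordinates can be completed to any value consistent with $R$, and the correctness of $\Pi$ on all such completions forces a unique value of $f$.

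The main obstacle is quantifying the thickness loss and pushing the amortization through with the right constants. This is precisely where the constant $c$ is fixed: the address part of size $c\log n$ and the target part of size $n^c$ must be large enough for the density thresholds to be restorable after each splitting step, while still cheap enough that the simulation overhead remains $O(\log n)$ bits per decision-tree query. Once $c$ is chosen, the simulation produces a decision tree for $f$ of depth $O(C / \log n)$, giving $\dclass(f) = O(\CC(f \circ \ind_{c\log n + n^c}) / \log n)$ and completing the proof.
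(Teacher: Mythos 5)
The paper does not prove this statement at all: it is imported verbatim from G{\"o}{\"o}s--Pitassi--Watson \cite{GPW18} and Raz--McKenzie \cite{raz1997separation}, so there is no internal proof to compare against. Judged on its own terms, your proposal gets the easy direction right and complete: simulating an optimal depth-$\dclass(f)$ tree, with Alice sending the $c\log n$ address bits of the queried copy and Bob answering with one target bit, indeed gives $\CC(f \circ \ind_{c\log n + n^c}) = O(\log n \cdot \dclass(f))$, and your choice of bi-partition matches the one in the statement.

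For the lower bound, however, what you have written is an outline of the Raz--McKenzie/GPW simulation rather than a proof, and the outline defers exactly the parts that constitute the theorem's difficulty. Concretely: (i) you never specify the density invariant --- in \cite{GPW18} this is a precise blockwise min-entropy condition on Alice's address blocks over the free coordinates, paired with a largeness condition on Bob's side, and the choice of measure is what makes the argument go through; (ii) you assert but do not prove the key structural fact that a rectangle satisfying the invariant leaves every free gadget coordinate unforced, i.e.\ the projection/full-range lemma showing all value patterns on free coordinates are still attainable inside $R$ (this is what justifies both the leaf argument and the correctness of answering a query by restricting $R$); and (iii) the amortization ``$\Omega(\log n)$ bits pay for $O(1)$ queries'' is precisely the content of the restoration lemma showing that after a message one can either recover the invariant cheaply or charge a query, and you explicitly flag quantifying this as ``the main obstacle.'' Since steps (i)--(iii) are essentially the whole theorem, the proposal as it stands has a genuine gap; as a plan it is the correct and standard route, namely the one taken in the works the paper cites, and it could be completed by reproducing their thickness/min-entropy machinery.
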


The following proposition gives an exact relation between the degree of composition of two functions and the degrees of the individual functions.
\begin{proposition}\label{proposition:tal-deg-compositon1}\label{proposition:tal-bs-compositon} \cite{tal2013properties}
Let $f:\{0, 1\}^n \rightarrow \{0, 1\}$ and $g:\{0, 1\}^m \rightarrow \{0,1\}$
be two Boolean functions, then $\degree(f\circ g) = \degree(f)\cdot \degree(g)$
\end{proposition}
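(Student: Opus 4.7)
The plan is to prove the two inequalities $\degree(f \circ g) \leq \degree(f)\cdot\degree(g)$ and $\degree(f \circ g) \geq \degree(f)\cdot\degree(g)$ separately, by directly analyzing the unique multilinear polynomial representation of $f \circ g$ obtained by substitution.

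For the upper bound, I would start with the unique multilinear polynomial $P_f(y_1, \ldots, y_n)$ representing $f$, and substitute $y_i \leftarrow P_g(x^{(i)})$, where $x^{(1)}, \ldots, x^{(n)}$ are disjoint $m$-bit blocks. Each monomial $\prod_{i \in S'} y_i$ of $P_f$ (with $|S'| \leq \degree(f)$) becomes $\prod_{i \in S'} P_g(x^{(i)})$, a polynomial of degree at most $|S'|\cdot\degree(g) \leq \degree(f)\cdot\degree(g)$. Because the blocks share no variables, no reductions modulo $x_j^2 = x_j$ are ever needed, so the result of the expansion is automatically multilinear in all $nm$ variables and therefore coincides with the unique multilinear representative $P_{f\circ g}$.

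For the matching lower bound, I would exhibit a single multilinear monomial of degree $\degree(f)\cdot\degree(g)$ whose coefficient in the above expansion is nonzero. Fix a set $S \subseteq [n]$ with $|S| = \degree(f)$ and $c_S \neq 0$, where $c_{S'}$ denotes the coefficient of $\prod_{i \in S'} y_i$ in $P_f$, and fix $T \subseteq [m]$ with $|T| = \degree(g)$ and $d_T \neq 0$, where $d_{T'}$ denotes the coefficient of $\prod_{j \in T'} x_j$ in $P_g$. Consider the candidate monomial $M = \prod_{i \in S}\prod_{j \in T} x^{(i)}_j$. Since $M$ uses no variables outside the blocks indexed by $S$, any monomial $\prod_{i \in S'} y_i$ of $P_f$ that contributes to $M$ must satisfy $S' \supseteq S$; inside each block $i \in S$ the factor $P_g(x^{(i)})$ must contribute the top monomial $\prod_{j \in T} x^{(i)}_j$ (weight $d_T$), and inside each block $i \in S' \setminus S$ it must contribute its constant term (weight $d_\emptyset$). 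Summing these contributions, the coefficient of $M$ in $P_{f\circ g}$ equals $d_T^{|S|}\cdot\sum_{S'\supseteq S} c_{S'}\, d_\emptyset^{|S'|-|S|}$.

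The potential obstacle is that this last sum might vanish due to cancellations among several monomials of $P_f$. This is exactly where the choice $|S| = \degree(f)$ is crucial: for any strict superset $S' \supsetneq S$ we would have $|S'| > \degree(f)$, forcing $c_{S'} = 0$. Hence the sum collapses to its lone nonvanishing term $c_S$, and the coefficient of $M$ equals $c_S \cdot d_T^{\degree(f)}$, which is nonzero by the choice of $S$ and $T$. Since $M$ has degree $|S|\cdot|T| = \degree(f)\cdot\degree(g)$, the inequality $\degree(f \circ g) \geq \degree(f)\cdot\degree(g)$ follows, matching the upper bound and completing the plan.
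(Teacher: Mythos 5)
Your proof is correct and complete. Note that the paper itself gives no proof of this proposition --- it is imported as a black box from Tal \cite{tal2013properties} --- so there is nothing to diverge from; what you wrote is the standard self-contained substitution argument that underlies the cited result. Both halves are sound: the upper bound correctly uses that the blocks $x^{(1)},\ldots,x^{(n)}$ are variable-disjoint, so substituting $P_g$ into $P_f$ already yields the unique multilinear representation of $f\circ g$; and in the lower bound you correctly identify and neutralize the only real danger, namely cancellation in the coefficient of $M=\prod_{i\in S}\prod_{j\in T}x^{(i)}_j$, by choosing $|S|=\degree(f)$ maximal so that every competing term $c_{S'}$ with $S'\supsetneq S$ vanishes, leaving the nonzero coefficient $c_S\, d_T^{\degree(f)}$. (The only degenerate cases, $P_f\equiv 0$ or $P_g\equiv 0$, make the inequality trivial, so they do not affect the argument.)
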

\begin{proof}{of Theorem~\ref{thm:cc-deg-bs-implications}:}
(Part 1) Assume the hypothesis of the statement of the theorem and let $f:\{0,1\}^n\rightarrow \{0,1\}$ be an arbitrary Boolean function. Let $c$ be the constant from \Cref{proposition:lifting-gpw}, and $m=c\log n$. Consider the function $F= f\circ \ind_{m+2^m}$. Consider the bi-partition of the input bits of $F$, in which Alice gets the address bits and Bob gets the target bits of the inputs to each copy of the inner indexing function. Invoking the hypothesis of the theorem on $F$ with respect to this bi-partition, we have that $\CC(F)= O(\degree(F)^2 \cdot \mathsf{polylog}(n))$. Combining this with Theorem~\ref{proposition:lifting-gpw} we get that $\dclass(f)= O(\CC(F)/ m) = O(\degree(F)^2\cdot \mathsf{polylog}(n))$.
Then using Proposition~\ref{proposition:tal-deg-compositon1} and by noting that $\degree(\ind)=m+1$ we get $\dclass(f) =O(\degree(f)^2\cdot \mathsf{polylog}(n))$.

(Part 2) Assume the hypothesis of the statement of the theorem, let $f:\{0,1\}^n\rightarrow \{0,1\}$ be any arbitrary Boolean function, and let $F$ be defined as in part 1. Invoking the hypothesis of the theorem on $F$ with respect to the bi-partition of part 1, we have that $\CC(F)= O(\bs(F)^2 \cdot \mathsf{polylog}(n))$. Combining this with Theorem~\ref{proposition:lifting-gpw} we get that $\dclass(f)= O(\CC(F)/ m) = O(\bs(F)^2\cdot \mathsf{polylog}(n))$. We finish by showing that $\bs(F) \leq (m+1) \cdot \fbs(f)$, which is what the rest of this proof focusses on.

Let $\bs(F)=k$. We will show that $\fbs(f)\geq \frac{k}{m+1}$. Let $x:=(x^{(1)},x^{(2)}, \cdots, x^{(n)})$ be an input such that $\bs(F, x)=k$. Let $\mathcal{B}=\{B_1, B_2, \cdots, B_k\}$ be a set of $k$ minimal disjoint sensitive blocks of $x$. 

We claim that for $i\in [k]$, at most $(m+1)$ blocks from $\mathcal{B}$ intersect $x^{(i)}$. To establish the claim, we start by observing that to change the output of indexing function on $x^{(i)}$ it is necessary to either change an address bit of $x^{(i)}$, or the unique target bit pointed to by the address bits. It follows from the minimality of the blocks in $\mathcal{B}$ that if a $B_j$ overlaps nontrivially with a $x^{(i)}$, then $B_j$ must contain either an address bit of $x^{(i)}$, or the unique target bit in $x^{(i)}$ that the address bits point to. Since there are $m$ address bits, one target bit, and the blocks in $\mathcal{B}$ are disjoint, the claim follows.

Define $y_i:=\ind_{m+2^m}(x^{(i)})$ and define $y=(y_1,\ldots, y_n)$. Note that $y$ is in the domain of $f$. For each block $B_j \in \mathcal{B}$ define $A_j = \{\ell \in [n] \mid x^{(\ell)}\cap B_j \neq \emptyset\}$. By the minimality of the blocks in $\mathcal{B}$ each $A_j$ is a sensitive block of $y$ with respect to $f$.

Consider the following assignment of weights to the sensitive (with respect to $f$) blocks of $y$. Each sensitive block $A$ of $y$ is assigned weight $w_A:=\frac{|\{j \in [k] \mid A_j=A\}|}{m+1}$. We first show that this assignment is a feasible point of the fractional block sensitivity LP. Fix an $i \in [n]$. Now, each sensitive block of $y$ of positive weight that contains $i$ is in fact $A_j$ for some $j\in[k]$ such that $B_j \cap x^{(i)} \neq \emptyset$. Now, since we have shown that the number of such $j$'s is at most $m+1$, it follows that the sum of the weights of all sensitive blocks of $y$ containing $i$ is at most $1$. Thus the assignment is feasible for the $\fbs$ LP. Hence,
\begin{align*}
\fbs(f) &\geq \sum_{A} w_A \\
& = \sum_A\frac{|\{j \in [k] \mid A_j=A\}|}{m+1} \\
&=\frac{k}{m+1}
\end{align*}
where the all the sums are  over all the sensitive blocks of $y$.
\end{proof}
We now prove Theorem~\ref{thm:corollary_of_main_cc} which is an immediate corollary of Theorem~\ref{thm:cc-deg-bs-implications}.
\begin{proof}{of Theorem~\ref{thm:corollary_of_main_cc}:}
Assume the hypothesis of the theorem. By Proposition~\ref{proposition:lifting} (\ref{proposition:lifting1}) and (\ref{proposition:lifting2}) for any Boolean function $f$, and for any bi-partition of $f$, $\CC(f) =O(\subcubedt(f)\cdot \mathsf{polylog}(n)) = O(\rank(f)\cdot \mathsf{polylog}(n))$. It is easy to see that a communication protocol can answer each query of a parity decision tree by exchanging 2 bits. Thus, $\CC(f) = O(\pdt(f))$. The theorem follows from  Theorem~\ref{thm:cc-deg-bs-implications}.

\end{proof}

\bibliographystyle{alpha}

\newcommand{\etalchar}[1]{$^{#1}$}

\appendix
\section{Extended preliminaries}
\label[appendix]{sec:exprelims}
Throughout this section, we assume that $f:\{0,1\}^n\rightarrow \{0,1\}$is a generic Boolean function, and $x=(x_1,\ldots,x_n)\in\{0,1\}^n$ a generic input to $f$. Restriction of $f$ to a set $S \subseteq \{0,1\}^n$ is denoted by $f\mid_S$.
\begin{definition}[Subcube, co-dimension]
\label[definition]{def:subsube}
A set $C \subseteq \{0,1\}^n$ is called a \emph{subsube} is there exist a set of indices $\mathcal{I} \subseteq [n]$ and an assignment function $A: \mathcal{I} \rightarrow \{0,1\}$ such that $C=\{x \in \{0,1\}^n \mid \forall i \in I, x_i=A(i)\}$. Note that for every subcube $C$ there is a unique such $I$ and $A$. The \emph{co-dimension} of $C$, denoted by $\codim(C)$, is defined as $|I|$.
\end{definition}
Restriction of $f$ to a subcube $C$ is also simply referred to as \emph{a restriction of $f$}.
\begin{definition}[$\mathsf{TRIBES}$]
\label[definition]{def:tribes}
$\mathsf{TRIBES}_{\sqrt{n} \times \sqrt{n}}$ is a Boolean function on $n$ bits that is defined as follows. Let $x:=(x_{i,j})_{1\leq i \leq n, 1 \leq j \leq n} \in \left(\{0,1\}^{\sqrt{n}}\right)^{\sqrt{n}}$. Then,
\[\mathsf{TRIBES}_{\sqrt{n} \times \sqrt{n}}(x)=\vee_{i=1}^n (\wedge_{j=1}^n x_{i,j}).\]
\end{definition}
\begin{definition}[Indexing function]
\label[definition]{def:ind}
For $m\geq 1$, the indexing function $\ind_{m+2^m}$ is defined to be a function on $m+2^m$ bits defined as follows. Let $y\in\{0,1\}^m$ and $z = (z_0,\ldots, z_{2^m-1})\in \{0,1\}^{2^m}$. Let $\mathsf{bin}(y)$ denote the integer in $\{0\} \cup [2^m-1]$ whose binary expansion is $y$. Then
\[\ind_{m+2^m}(y,z)=z_{\mathsf{bin}(y)}.\]
\end{definition}
\begin{definition}[Majority function]
\label[definition]{def:maj} Let $n$ be an odd positive integer.
The majority function on $n$ bits is defined as follows.
\[\mathsf{MAJ}_n(x)=\left\{\begin{array}{cc}1 & \mbox{if $|\{i \in [n]\mid x_i=1\}|>\frac{n}{2}$,}\\0 & \mbox{otherwise.}\end{array}\right.\]
\end{definition}
\begin{definition}[Function composition]
\label{def:funcomp}
The composition of two Boolean functions $f: \{0,1\}^m \rightarrow\{0,1\}$ and $g: \{0,1\}^n \rightarrow\{0,1\}$, denoted as $f\circ g: \{0,1\}^{mn} \rightarrow \{0,1\}$ is defined as:

\[ (f\circ g) (x^{(1)}, x^{(2)}, \cdots x^{(m)}) = f(g(x^{(1)}), g(x^{(2)}), \cdots, g(x^{(m)})),\]

where for each $i\in [m]$, $x^{(i)} \in \{0,1\}^n$.
\end{definition}
\subsection{Decision tree complexity}
\begin{definition}[Deterministic decision tree complexity] 
\label[definition]{def:detcomp}
A deterministic decision tree $T$ is a rooted ordered binary tree. Each internal node of $T$ has two children and is labelled by an index in $[n]$. Each leaf is labelled $0$ or $1$. For an input $x=(x_1,\ldots,x_n)\in\{0,1\}^n$, $T$ is evaluated by starting from the root, and navigating down the tree till we reach a leaf, as follows. In a time step, let $i$ be the label of the current internal node. Then, we move to the left child of the current node if $x_i=0$ and its right child if $x_i=1$. When the computation reaches a leaf, the bit that labels the leaf is output. $T$ is said to compute $f$ if for each $x\in\{0,1\}^n$, $T$ outputs $f(x)$. The cost of $T$ on $x$, which we denote by $\mathsf{cost}(T, x)$, is the depth of the leaf of $T$ that $x$ takes it to.
The decision tree complexity of $f$ is defined as:

\[ \dclass(f) =  \min_{T} \max_x \mathsf{cost}(T,x),\]
where the minimum is over all decision trees that compute $f$, and the maximum is over all strings in $\{0,1\}^n$.
\end{definition}
Clearly $0 \leq \dclass(f) \leq n$.
\begin{definition}[Decision tree rank]
\label[definition]{def:rank}
Let $T$ be a decision tree. The \emph{rank} of each node $v$ of $T$, denoted by $\rank(v)$, is defined recursively as follows. If $v$ is a leaf, then $\rank(v)$ is $0$. Let $v$ be an internal node with children $v_\ell$ and $v_r$. If $\rank(v_\ell) \neq \rank(v_r)$ then $\rank(v)=\max\{\rank(v_\ell), \rank(v_r)\}$. Else, $\rank(v)=\rank(v_\ell)+1$. The rank of $T$, denoted by $\rank(T)$, is defined to be the rank of its root. the rank of $f$ is defined as
\[\rank(f)=\min_T \rank(T),\]
where the minimum is over all decision trees that compute $f$.
\end{definition}
It is easy to see that $\rank(f) \leq \dclass(f)$.
\begin{definition}[Parity decision tree complexity] 
\label[definition]{def:detparity}
A parity decision tree $T$ is a rooted ordered binary tree. Each internal node of $T$ has two children and is labelled by a subset of indices $S \subseteq [n]$. Each leaf is labelled $0$ or $1$. For an input $x=(x_1,\ldots,x_n)\in\{0,1\}^n$, $T$ is evaluated by starting from the root, and navigating down the tree till we reach a leaf, as follows. In a time step, let $S$ be the label of the current internal node. Then, we move to the left child of the current node if $\oplus_{i \in S}x_i=0$ and its right child if $\oplus_{i \in S}x_i=1$. When the computation reaches a leaf, the bit that labels the leaf is output. $T$ is said to compute $f$ if for each $x\in\{0,1\}^n$, $T$ outputs $f(x)$. The cost of $T$ on $x$, which we denote by $\mathsf{cost}(T, x)$, is the depth of the leaf of $T$ that $x$ takes it to.
The parity decision tree complexity of $f$ is defined as:

\[ \pdt(f) =  \min_{T} \max_x \mathsf{cost}(T,x),\]
where the minimum is over all parity decision trees that compute $f$, and the maximum is over all strings in $\{0,1\}^n$.
\end{definition}
\begin{definition}[Subcube decision tree complexity] 
\label[definition]{def:detsubcube}
A subcube decision tree $T$ is a rooted ordered binary tree. Each internal node of $T$ has two children and is labelled by a subcube $C \subseteq \{0,1\}^n$. Each leaf is labelled $0$ or $1$. For an input $x=(x_1,\ldots,x_n)\in\{0,1\}^n$, $T$ is evaluated by starting from the root, and navigating down the tree till we reach a leaf, as follows. In a time step, let $C$ be the label of the current internal node. Then, we move to the left child of the current node if $x \notin C$ and its right child if $x \in C$. When the computation reaches a leaf, the bit that labels the leaf is output. $T$ is said to compute $f$ if for each $x\in\{0,1\}^n$, $T$ outputs $f(x)$. The cost of $T$ on $x$, which we denote by $\mathsf{cost}(T, x)$, is the depth of the leaf of $T$ that $x$ takes it to.
The subcube decision tree complexity of $f$ is defined as:
\[ \subcubedt(f) =  \min_{T} \max_x \mathsf{cost}(T,x),\]
where the minimum is over all subcube decision trees that compute $f$, and the maximum is over all strings in $\{0,1\}^n$.
\end{definition}
\subsection{Complexity measures} 
\begin{definition}[Sensitivity]
\label[definition]{def:sensitivity}
An index $i \in [n]$ is said to be a \emph{sensitive} index (or a sensitive bit) for $x$ is $f(x)\neq f(x^{\oplus i})$. The \emph{sensitivity of $f$ on $x$}, denoted by $\s(f,x)$, is defined as the number of sensitive bits for $x$. In other words,
\[s(f,x)=|\{i \in [n] \mid f(x) \neq f(x^i)\}|.\]
The \emph{sensitivity of $f$}, denoted by $\s(f)$, is defined as
\[\s(f)=\max_{x \in \{0,1\}^n}\s(f,x).\]
\end{definition}
A set of indices $B \in [n]$ is called a \emph{sensitive block} of $x$ if $f(x)\neq f(x^{\oplus B})$.
\begin{definition}[Block sensitivity]
\label[definition]{def:bs}
The \emph{block sensitivity of $f$ on input $x$}, denoted by $\bs(f, x)$, is defined as the maximum number of disjoint sensitive blocks of $x$.

The \emph{block sensitivity of $f$}, denoted by $\bs(f)$, is defined as
\[\bs(f)=\max_{x\in\{0,1\}^n}\bs(f,x).\]
\end{definition}

Let $\{B_1,\ldots,B_t\}$ be the set of all sensitive blocks of $x$.  
$\bs(f,x)$ is the value of the following integer linear program.

\begin{equation*}
\begin{array}{lll@{}ll}
bs(f,x)=& \text{max}  & \displaystyle\sum_{j=1}^{t} w_{j} &\\
 &\text{subject to}& \displaystyle\sum_{j: B_j\ni i}   &w_{j} \leq 1  &\forall i\in [n]\\
       &          &                                                &w_{j} \in \{0,1\} & \forall j\in[t]
\end{array}
\end{equation*}
The value of the linear program obtained by relaxing the integrality constraint of the above  program is called the fractional block sensitivity of $f$ on input $x$, denoted by $\fbs(f, x)$.
\begin{definition}[Fractional block sensitivity \cite{KT16, GSS16}]
\label[definition]{def:fbs}
The \emph{fractional block sensitivity} of $f$ on $x$, denoted by $\fbs(f,x)$, is defined as
\begin{equation*}
\begin{array}{lll@{}ll}
fbs(f,x)=& \text{max}  & \displaystyle\sum_{j=1}^{t} w_{j} &\\
 &\text{subject to}& \displaystyle\sum_{j:B_j\ni i}   &w_{j} \leq 1  &\forall i\in [n]\\
       &          &                                                &w_{j} \in [0,1] & \forall j\in[t]
\end{array}
\end{equation*}
The \emph{fractional block sensitivity of $f$}, denoted by $\fbs(f)$, is defined as 
\[\fbs(f)=\max_{x\in\{0,1\}^n}\fbs(f,x).\]
\end{definition}
\begin{definition}[Certificate complexity, minimum certificate complexity and maximin certificate complexity]
\label{def:certificate}
A subcube $C$ is called a \emph{$0$-certificate} (resp. $1$-certificate) of $f$ if $f\mid_C$ is the constant $0$ (resp. $1$) function. A subcube is called a \emph{certificate} if it is a $0$-certificate or a $1$-certificate. The \emph{certificate complexity of $f$ on $x$}, denoted by $\cer(f,x)$, is the smallest co-dimension of a certificate $C$ that contains $x$. For a bit $b\in\{0,1\}$ the \emph{$b$-certificate complexity of $f$}, denoted by $\cer_b(f)$, is defined as
\[ \cer_b(f) = \max_{x \in \{0,1\}^n, f(x)=b } \cer(f,x).\]
The \emph{certificate complexity of $f$}, denoted by $\cer(f)$, is defined as,
\[ \cer(f) = \max \{\cer_0(f), \cer_1(f)\}.\]
A \emph{minimum certificate} of $f$ is defined to be a subcube of minimum co-dimension on which $f$ is constant.
The \emph{minimum certificate complexity of $f$}, denoted by $\cmin(f)$, is defined as the co-dimension of a minimum certificate. Equivalently,
\[ \cmin(f) = \min_{x \in \{0,1\}^n } \cer(f,x).\]
The \emph{maximin certificate complexity of $f$}, denoted by $\cminstar(f)$, is defined as,
\[ \cminstar(f) = \max_C \cmin(f\mid_C).\]
Above, the maximum is over all subcubes of $\{0,1\}^n$, and $f\mid_C$ is viewed as a Boolean function on the Boolean hypercube $\{0,1\}^{[n]\setminus I}$.
\end{definition}
We say that a set of indices $S\subseteq [n]$ forms a certificate for $x$, to mean that $\{y \in \{0,1\}^n \mid y_i=x_i \forall i \in S\}$ is a certificate.
\begin{definition}[$P_f$ and degree]
It is well known that every Boolean function $f:\{0,1\}^n\rightarrow\{0,1\}$ can be represented by a unique multi-linear polynomial $P_f(x_1,\ldots,x_n)$ with real coefficients such that $P(x)=f(x)$ for all $x \in \{0,1\}^n$. The \emph{exact degree of $f$}, or simply the degree of $f$, denoted by $\degree(f)$, is defined to be the degree of $P_f$. 
\end{definition}
\paragraph{Identification of variables} Let $i,j \in [n]$ be distinct indices in $[n]$. Let $S:=\{x \in \{0,1\}^n \mid x_i=x_j\}$. The restriction $f\mid_S$ of $f$ to $S$ is the function obtained from $f$ by identifying variables $x_i$ and $x_j$. $f\mid_S$ can be thought of as a Boolean function defined on the hypercube $\{0,1\}^{[n]\setminus\{j\}}$.
\begin{definition}
\label[definition]{def:symfun}
$f$ is a \emph{symmetric} Boolean function if $f(x)$ depends only on the Hamming weight of $x$, i.e., there exists a function $g:[n]\cup\{0\} \rightarrow \{0,1\}$ suh that $f(x)=g(|x|)$.
\end{definition}
The following proposition lists some well-known facts about various complexity measures \cite{BdeW02Survey, KT16, GSS16}.
\begin{proposition}
\label[proposition]{prop:basic_facts}\  
\begin{enumerate}
\item $s(f,x)\leq \bs(f,x) \leq \fbs(f,x) \leq \cer(f,x)\leq \dclass(f)$.
\item $\cer(f)=O(\bs(f)^2)$. \cite{N91}
\item $\bs(f)=O(\degree(f)^2)$. \cite{NS94}
\item $\dclass(f)=O(\cer(f)\cdot\bs(f))$. \cite{BBCMdW01}
\item $\dclass(f)=O(\degree(f)\cdot\bs(f))$. \cite{M04}
\item $\degree(f)\leq\dclass(f)$.
\item Measures $\s, \bs, \fbs, \cer, \cminstar, \degree, \dclass$ do not increase under restrictions to subcubes. That is, for any $\mathsf{M} \in \{\s, \bs, \fbs, \cer, \cminstar, \degree, \dclass\}$ and any subcube $C$, $\mathsf{M}(f\mid_C)\leq \mathsf{M}(f).$\footnote{Note that the same cannot be said about$\cmin$.}
\item $\bs, \degree, \cer$ and $\dclass$ do not increase under identification of variables. That is, for $\mathsf{M} \in \{\bs, \degree\, \cer, \dclass\}$, distinct $i,j \in [n]$ and $S=\{x \in \{0,1\}^n \mid x_i=x_j\}$,  $\mathsf{M}(f\mid_S)\leq \mathsf{M}(f)$.\footnote{The same cannot be said about $\s$.}
\item Let $f$ be non-constant and symmetric. Then $\degree(f)=n-o(n)$ \cite{GRR97} and $\bs(f)=\s(f)\geq\lceil\frac{n+1}{2}\rceil$ \cite{T84}.
\end{enumerate}
\end{proposition}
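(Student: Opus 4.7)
The plan is to verify each of the nine items in turn; most are classical facts I would quote directly from the references attached in the statement, and the rest follow from short arguments starting from the definitions.

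Item (1) is a chain of inequalities: a single sensitive index is a singleton sensitive block, giving $\s(f,x)\leq \bs(f,x)$; the integer program defining $\bs(f,x)$ is obtained from the LP defining $\fbs(f,x)$ by imposing $\{0,1\}$ integrality, giving $\bs(f,x)\leq \fbs(f,x)$; any certificate of $x$ must hit every sensitive block of $x$, so the indicator vector of the certificate's fixed indices is feasible for the LP dual of $\fbs$, yielding $\fbs(f,x)\leq \cer(f,x)$ by LP duality; and the set of bits queried along the path of $x$ in an optimal decision tree is a certificate for $x$, giving $\cer(f,x)\leq\dclass(f)$. Items (2)--(5) and (9) I would invoke directly from the cited works, whose arguments are standard (iterative certificate-building for (2); the Nisan--Szegedy polynomial method for (3); drill-down arguments along a certificate for (4) and (5); and the symmetric-function lower bounds of von zur Gathen--Roche and Turán for (9)). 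Item (6) follows because a depth-$d$ decision tree expresses $f$ as a sum over its $1$-leaves of multilinear indicators of root-to-leaf paths, each a monomial of degree at most $d$.

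For items (7) and (8) the approach is to argue measure by measure that an optimal witness for the restricted or variable-identified function lifts to a witness for the original function of no greater cost. Under restriction to a subcube $C$: a sensitive block, certificate, or low-degree polynomial for $f|_C$ extends to the corresponding object for $f$ (pad the witness with the fixed bits of $C$, or substitute them back into $P_f$); $\cminstar$ is monotone because the outer maximum in its definition ranges over a smaller family of subcubes once we restrict. Under identification of the variables $x_i$ and $x_j$: a monomial of $P_{f|_S}$ of degree $d$ lifts to a monomial of $P_f$ of degree at least $d$ (by choosing either representative for the identified coordinate); a sensitive block or certificate for $f|_S$ lifts by duplicating the identified coordinate into both copies; and a decision tree for $f|_S$ becomes one for $f$ of the same depth by treating a query to the identified coordinate as a query to either of the original two.

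The chief subtlety I anticipate lies in the footnoted exclusions. One must confirm that $\cmin$ is not monotone under restriction (the input achieving the minimum certificate in $f$ may be eliminated by a bad restriction, leaving only larger certificates behind) and that $\s$ is not monotone under variable identification (two distinct sensitive indices may merge into a single index whose flip no longer flips $f$); small explicit examples such as the $\mathsf{OR}$ function restricted to a particular subcube, and the parity of two bits identified to one, suffice. I do not expect any real obstacle, as every remaining claim either follows by direct LP/polynomial/tree manipulation or is a direct invocation of well-known prior work.
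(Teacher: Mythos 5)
The paper itself offers no proof of this proposition: it is stated as a compendium of known facts with pointers to the literature, so your plan of invoking the cited works for items (2)--(5) and (9) and checking the rest from the definitions is exactly in the paper's spirit. Your treatments of (1) and (6) are correct (in (6) the indicator of a root-to-leaf path is a product of literals, i.e.\ a multilinear polynomial of degree at most $d$ rather than a single monomial, but the conclusion $\degree(f)\le\dclass(f)$ is unaffected), and your remark on the footnoted failures of $\cmin$ and $\s$ is on target.

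The one genuine flaw is the direction of the arguments you sketch for (7) and (8) in the case of the ``complexity-type'' measures $\cer$, $\degree$ and $\dclass$. Lifting an optimal witness of $f\mid_C$ (or $f\mid_S$) to a witness for $f$ is the right move only for the maximization measures $\s,\bs,\fbs$ (and for $\degree$ if you phrase the witness as a nonzero coefficient of a high-degree monomial, which does lift since a nonzero coefficient of $P_{f\mid_C}$ or $P_{f\mid_S}$ is a sum of coefficients of preimage monomials of $P_f$ of no smaller degree). For $\cer$ and $\dclass$ you must instead project an optimal object for $f$ down: intersect an optimal certificate for $f$ at $x\in C$ with $C$ (lifting a certificate of $f\mid_C$ only yields $\cer(f,x)\le\cer(f\mid_C,x)+\codim(C)$, which bounds the wrong quantity); hardwire the fixed bits of $C$ into an optimal decision tree for $f$; and, under identification, replace each query to $x_i$ or $x_j$ in an optimal tree for $f$ by a query to the merged variable. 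Your stated conversion of a decision tree for $f\mid_S$ into one for $f$ proves nothing about $\dclass(f\mid_S)$ and in fact does not even compute $f$ outside $S$. These are one-line repairs and the facts are of course standard, but as written those steps would fail; the $\cminstar$ argument (the maximum in its definition ranges over fewer subcubes after restriction, combined with the projection argument for $\cmin$ on each fixed subcube) is fine.
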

\subsection{Communication complexity}
In the two-party communication model (introduced by Yao \cite{A79}), two parties Alice and Bob jointly hold an input to a Boolean function $f:\{0,1\}^{n_1} \times \{0,1\}^{n_2} \rightarrow \{0,1\}$. There is a bi-partition of  the set of input indices $[n]$ into two parts of sizes $n_1$ and $n_2$, say, where $n_1+n_2=n$. Alice and Bob respectively hold the input bits that correspond to the indices in these two parts. Let $x^{(1)}$ and $x^{(2)}$ be the input strings held by Alice and Bob respectively. Alice and Bob are interested in computing $f(x^{(1)},x^{(2)})$, and they are allowed to exchange bits in an interactive fashion via a communication channel. The objective is to jointly compute $f(x^{(1)},x^{(2)})$ by exchanging as few bits as possible. A set of rules that the parties follow to determine the messages that they send to each other in each step is called a protocol. A protocol for $f$ is correct if for every input $(x^{(1)},x^{(2)})$, at the end of the interaction, a party outputs $f(x^{(1)},x^{(2)})$.

The cost of a protocol $\Pi$ on an input $(x^{(1)},x^{(2)})\in \{0,1\}^{n_1} \times \{0,1\}^{n_2}$, denoted by $\mathsf{cost}(\Pi, (x^{(1)},x^{(2)}))$, is the number of bits exchanged by $\Pi$ on input $(x^{(1)},x^{(2)})$. The deterministic communication complexity of $f$, denoted by $\CC(f)$, is defined as follows:

\[\CC(F) = \min_{\Pi} \max_{(x^{(1)},x^{(2)})\in \{0,1\}^{n_1} \times \{0,1\}^{n_2}} \mathsf{cost}(\Pi, (x^{(1)},x^{(2)})).\]
where the minimum is over all correct protocols of $f$. The bi-partition of the inputs in the definition of $\CC$ is implicit and will be clear from the context. See Section~\ref{prelims} and textbooks by Rao and Yehudayoff \cite{RY20} and by Kushilevitz and Nisan \cite{KN97} for a comprehensive introduction to the subject.
\section{Useful facts about zebra functions}
\label[appendix]{sec:zebra-facts}
In this section, we prove \Cref{prop:zebra-useful-facts}.
\begin{proof}[Proof of \Cref{prop:zebra-useful-facts}]
 
(Part 1) Follows immediately from the definition of stripes.

(Part 2) Suppose without loss of generality that $f(0^n)=0$. Now, one can show that $x\in\{0,1\}^n$, $f(x)=\alt_f(x) \mod 2$ by an easy induction on $|x|$ ($|x|+1$ is the number of points on any monotone path from $0^n$ to $x$). This completes the proof of this part.

(Part 3) Consider set $S=\{z: z \le \mbox{ and }x\in \stripe_f(i)\}$. $S \neq \phi$ as $x \in S$. The proof follows by considering a point in $S$ with minimum hamming weight.

(Part 4) Without loss of generality, let $y$ be obtained by flipping a $1$-bit of $x$ to $0$.
Since $x$ is a minimal point of $\stripe_f(i)$ we have $y \notin \stripe_f(i)$. Considering any monotone path from $0^n$ to $x$ through $y$, we conclude that $\alt_f(x)=\alt_f(y)+1$. the proof follows.

(Part 5) For any string $x\in \{0,1\}^{[n]\setminus\{j\}}$, define a string $e_j(x)$ to be the following string in $\{0,1\}^n$:
\[\forall k \in [n], \left(e_j(x)\right)_k=\left\{\begin{array}{cc}x_k & \mbox{if $k \neq j$,}\\b & \mbox{otherwise.}\end{array}\right.\]
Then, for all $y \in \{0,1\}^{[n]\setminus \{j\}}$, $f\mid_{x_j=b}(y)=f(e_j(y))$. Thus the alternation number of a monotone path from $0^{[n]\setminus\{j\}}$ to $1^{[n]\setminus\{j\}}$ with respect to $f\mid_{x_j=b}$ is equal to the alternation number of some monotone path from $e_j(0^{[n]\setminus\{j\}})$ to $e_j(1^{[n]\setminus\{j\}})$ with respect to $f$. The proof now follows by \Cref{prop:groundwork-for-stripes}.
\end{proof}

\section{Query complexity and maximin certificate complexity}
\label[appendin]{sec:cminstar_key}
In this section we prove \Cref{lemma:d_n_cstar}.
\begin{proof}[Proof of \Cref{lemma:d_n_cstar}]
Consider the following decision tree computing $f$.\\
\begin{algorithm}[H]
\label{algo: decision-tree}
\SetAlgoLined
\textbf{Input: }$x=(x_1,\ldots,x_n)\in\{0,1\}^n$.\\
$\mathcal{D} \gets \{0,1\}^n, g \gets f$.\\
\While{$g$ is not constant}{
    Let $C=\{y \in \mathcal{D} \mid y_i=a_i \forall i \in S\}$ be a certificate of $g$ with least co-dimension.\\
    Query $x_i$ for all $i\in S$. let the outcomes be $x_i=a'_i$ for all $i \in S$.\\
    $\mathcal{D} \gets \mathcal{D} \cap \{y \in \mathcal{D} \mid y_i=a'_i \forall i\in S\}$.\\
    $g\gets g\mid_\mathcal{D}$.
}
Return value of constant function $g$.
\caption{}
\end{algorithm}

Algorithm~\ref{algo: decision-tree} is easily seen to always halt and compute $f$.

(Part 1) We will show that Algorithm~\ref{algo: decision-tree} makes at most $\cminstar(f)(\bs^{(0)}(f)+\bs^{(1)}(f)) \le 2\cminstar(f)\bs(f)$ queries.

Towards a contradiction, suppose if possible that the number of iterations is strictly more than $\bs^{(0)}(f)+\bs^{(1)}(f)$ for some input.
Then Algorithm~\ref{algo: decision-tree} queries either at least $\bs^{(1)}(f)+1$ 0-certificates or at least $\bs^{(0)}(f)+1$ 1-certificates. 
We assume without loss of generality that the algorithm queries at least $\bs^{(1)}(f)+1$ 0-certificates. Consider the domain $\mathcal{D}$ right after the $\bs^{(1)}(f)$-th 0-certificate is queried. Since $f\mid_\mathcal{D}$ is not a constant function there exits an input $z \in \mathcal{D}$ such that $f(z) = 1$. Now, for each 0-certificate $C$ queried so far, the set of indices where $z$ and $C$ disagree is a sensitive block of $z$. Furthermore, these blocks are all disjoint, as each time the algorithm chooses a certificate that is consistent with the query outcomes so far. It follows that the algorithm has already fully queried $\bs(f)\geq \bs(f,z)$ may disjoint blocks of $z$, and the answers are all consistent with $z$. Since these bits form a 1-certificate, we have reached a contradiction to the fact the $f\mid_{\mathcal{D}}$ is non-constant. This completes the proof.

(Part 2) We will show that Algorithm~\ref{algo: decision-tree} makes at most $\cminstar(f)\degree(f)$ queries.

First we show that every certificate of $f$ must intersect every maximum degree monomial of the unique multilinear polynomial $P_f$ that represents $f$. Suppose not, and there exists a certificate $C=\{y \in \mathcal{D} \mid y_i=a_i \forall i \in S\}$ and a maximum monomial $M$ of $P_f$, such that $S \cap M=\emptyset$. For each $i\in S$, substitute $a_i$ for $y_i$. since $M$ is a leading monomial and disjoint from $S$, the resultant polynomial is non-constant, that contradicts the assumption that $C$ is a certificate.

It follows that after every iteration of the \emph{while} loop, $\degree(g)$ drops by at least 1.
Hence the number of iterations of the loop is at most $\degree(f)$.
Since we make at most $\cminstar(f)$ queries in each iteration, the theorem follows.
\end{proof}
\section{Decision tree rank and subcube decision tree}\label{section-proof-of-lifting}
In this section we prove \Cref{proposition:lifting}.
\begin{proof}{of \ref{proposition:lifting} (\ref{proposition:lifting1})}
(Part (1)) The first inequality follows from the observation that a communication protocol can simulate a subcube decision tree of depth $d$ by exchanging at most $2d$ bits. Each query of a subcube decision tree can be evaluated by an exchange of 2 bits, as follows. Let the subcube queried at an internal node be $C=\{x \in \{0,1\}^n \mid x_i=b_i \forall i \in S\}$. For each $i\in S$, $x_i$ is held by one of the parties. Thus, each party may separately check if the values of all the variables $x_i$ for $i\in S$ that they hold match with $a_i$, exchange the outcomes of those checks, and thus answer the subcube query of membership in $C$. The second inequality holds as a decision tree is a special kind of subcube decision tree where all the subcubes queried have co-dimension $1$.

(Part 2) For the first inequality, we prove by induction on $t$ that for every subcube decision tree $T$ of depth $t$, there exists a decision tree $T'$ of rank at most $t$ that computes the same function. The base case $t=0$ is trivial. Now, let $X$ be a subcube decision tree of $f$ of depth $t\geq 1$. Let $a$ be its root, querying subcube $\{y \in \{0,1\}^n \mid y_{i_1}=a_{i_1},\ldots,y_{i_k}=a_{i_k}\}$, and the two subtrees of $a$ be $X_1$ and $X_2$, each of depth at most $t-1$. By inductive hypothesis, there exist two decision trees $T_1$ and $T_2$, each of rank at most $t-1$, that compute same functions as $X_1$ and $X_2$ respectively. Next, we replace $a$ by a decision tree $T''$ that queries the input variables with indices $i_1,\ldots,i_k$ and decides on the outcome of the subcube query. $T''$ queries the variables in order, and outputs $0$ if it finds a $i_j$ such that $y_{i_j}\neq a_{i_j}$. If it finds no such $i_j$ it outputs $1$. Next, we replace each leaf of $T''$ by either $X_1$ or $X_2$ depending on the outcome of the subcube query in that leaf. It is easy to check that the rank of the resulting tree is at most $(t-1)+1=t$.

Now we prove the second inequality. Let $T$ be a decision tree with rank $\rank(f)=r\geq 1$, say, and depth $d \leq n$, that computes $f$. We will show that there is a subcube decision tree $T'$ of depth at most $r (\log n+1)$ that computes $f$.

Let $v$ be any node of $T$. Let the variables queried on the unique path from the root of $T$ to $v$ be $x_{i_1},\ldots,x_{i_\ell}$ and the answers to these queries be $a_{i_1},\ldots,a_{i_\ell}$. Thus, if $T$ is run on an input $x$, the computation visits node $v$ \emph{if and only if} $x$ belongs to the subcube $\{y \in \{0,1\}^n \mid y_{i_1}=a_{i_1},\ldots,y_{i_k}=a_{i_k}\}$. Thus, a single subcube query suffices to determine whether a particular node of $T$ is visited. Extending this idea, we may do a binary search on the unique path in $T$ from its root to $v$ to find out the first vertex off the path that the computation visits when $T$ is run on $x$. The number of subcube queries spent is at most one more the logarithm of the depth of $v$; in particular, it is at most $\log n+1$.

Now, we describe $T'$ recursively. Let $x$ be an input. Let $v$ be a deepest vertex in $T$ such that $\rank(v)=r$. Let $P$ denote the path in $T$ from its root to $v$. Using the idea of the preceding paragraph $T'$ first finds out, spending at most $\log n+1$ queries, the first vertex $u \notin P$ in $T$ that the computation visits when $T$ is run on $x$. Next, we show that $\rank(u)\leq r-1$. To see this, first note that $u$ is a child of a vertex $w\in P$. Now, if $w \neq v$, then $w$ has a child that is in $P$ and hence has rank $r$. Since $\rank(w)=r$, therefore the rank of the other child $u$ of $w$ must be strictly less than $r$. On the other hand, if $w=v$, then $\rank(u)\leq r-1$ by the definition of $v$. Next, $T'$ recursively runs the subtree of $T$ rooted at $u$ on $x$, whose rank is at most $r-1$ by the preceding argument. The complexity of $T'$ is readily seen to be at most $r\cdot(\log n+1)$.
\end{proof}
\end{document}